  \newcolumntype{d}{D{.}{.}{-1}}
\newcommand{\HL}[1]{{\color{blue} \bf XX #1 XX\ }}
 \title{Distributed Unknown-Input-Observers for Cyber Attack Detection and Isolation in Formation Flying UAVs }
 \author{  Lebsework Negash\thanks{PhD Candidate, Department of  Aerospace Engineering, KAIST. 291 Daehak-ro, N7-3332, Yuseong, Daejeon 34141, Republic of Korea; {\tt lebsework@kaist.ac.kr}},~
     Sang-Hyeon Kim\thanks{PhD Candidate, Department of  Aerospace Engineering, KAIST. 291 Daehak-ro, N7-3333, Yuseong, Daejeon 34141, Republic of Korea; {\tt k3special@kaist.ac.kr}},
  and Han-Lim Choi\thanks{Associate Professor, Department of Aerospace Engineer, KAIST. 291 Daehak-ro, N7-4303, Yuseong, Daejeon, 34141, Republic of Korea; Tel: +82-42-350-3727; Fax: +82-42-350-3710; {\tt hanlimc@kaist.ac.kr}. AIAA Member. Corresponding Author.}
 }
\newtheorem{theorem}{Theorem}
 \newtheorem{corollary}{Corollary}
 \theoremstyle{definition}
  \newtheorem{definition}{Definition}
\begin{document}

\maketitle

\begin{abstract}
In this paper, cyber attack detection and isolation is studied on a network of UAVs in a formation flying setup. As the UAVs communicate to reach consensus on their states while making the formation, the communication network among the UAVs makes them vulnerable to a potential attack from malicious adversaries. Two types of attacks pertinent to a network of UAVs have been considered: a node attack on the UAVs and a deception attack on the communication between the UAVs. UAVs formation control presented using a consensus algorithm to reach a pre-specified formation. A node and a communication path deception cyber attacks on the UAV's network are considered with their respective models in the formation setup. For these cyber attacks detection, a bank of Unknown Input Observer (UIO) based distributed fault detection scheme proposed to detect and identify the compromised UAV in the formation. A rule based on the residuals generated using the bank of UIOs are used to detect attacks and identify the compromised UAV in the formation. Further, an algorithm developed to remove the faulty UAV from the network once an attack detected and the compromised UAV isolated while maintaining the formation flight with a missing UAV node.
\end{abstract}

\section{Introduction} \label{sec:intro}

Recent advancement in UAV’s capabilities, operating in an autonomous mode, a high-end computing and communication infrastructures onboard have spurred a wide interest from law enforcement to commercial sectors in deploying a large number of UAVs. These make the UAVs an ideal candidate for a coordinated task where it is not  possible to perform the task single-handedly and efficiently~\cite{miller2006mini,goodrich2008supporting,doherty2007uav} such as finding and rescue mission, or law enforcement such as border patrolling and drug trafficking monitoring~ \cite{waharte2010supporting,rudol2008human,matveev2011method,HomelandSecurity2014,bolkcom2004homeland,haddal2010homeland}.  These tasks rely on the  cooperative  control nature of multi-UAVs and their interaction with the environment with all its uncertainties. Flocking in birds or school of fish has a motion with a well-coordinated pattern. These inspired the robotic community to develop a similar structure for a coordinated task or flight formation control, which  basically uses a distributed control strategy.

The decentralized control nature of the system makes it vulnerable to malfunction and possible threats or attacks~\cite{MengGuo2012}. In addition, UAVs are  cyber-physical systems (CPS) with a tight integration of physical process, computational resource, measurement and communication capabilities. The control unit monitors and controls the system status while coordinating the flight through sensors and actuators onboard~\cite{pasqualetti2013attack}. Since many UAVs use off-the-shelf communication equipment and computing components (flight controller boards) with well-known protocols, these make them more open  and prone to cyber-attacks from adversaries~\cite{cardenas2008secure,kim2012cyber}. Thus, the use of standard protocols on mission critical systems leads to a source of cybersecurity vulnerability where adversaries are capable of exploiting commonly known Internet vulnerabilities~\cite{mander2007data,rodday2016exploring}. Many unmanned vehicles use encryption of data channels to prevent cyber attacks but relying on it as the only defense mechanism is misguided~\cite{nilsson2009defense}. In addition, there are attacks on multiple sensors which can corrupt the state of the UAV without the need of breaking the encryption. Examples of these attacks are spoofing of GPS or Automatic Dependent Surveillance-Broadcast (ADS-B) signals ~\cite{warner2003gps,krozel2005independent,kim2012cyber}. Traditional computer science cybersecurity approach focused on the integrity of data, encryption and restricting access to sensitive data. It  only protects the computing resource of the cyber-physical system and is oblivious to the cyber-physical interaction \cite{kwon2015real}. While this is necessary to keep the system secured, it fails to restrict access to the system~\cite{shull2013analysis,banerjee2012ensuring}.  These put the security of the previously closed and isolated control systems to a new dimension of threats where classical control cannot deal with in its fault identification and isolation (FDI) scheme used to identify and clear faults in control systems. Thus, the problem of the UAV security is studied here from  a complementary control theory and fault detection perspective where compatibility of the measurements with the underlying physical process of the control mechanism is exploited.

A formation flight of UAVs is an example of multi-agent systems performing a shared task using inter-vehicle communication to coordinate their action and reach a consensus on the desired moving formation setup. As defined by Olfati-Saber et al~\cite{olfati2007consensus}, ``\textit{consensus means to reach an agreement regarding a certain quality of interest that depends on the state of all agents. Consensuses algorithm is an interaction rule that specified the information exchange between an agent and all of its neighbors on the network}." The consensus problem for networks of dynamic systems  was presented by Olfati-Sabri and Murray in\cite{murray2003consensus} and showed that connectivity of the network is the key to reaching a consensus. Fax and Murray~\cite{fax2004information}  presented a vehicle cooperative network performing a shared task. They made use of Nyquist criterion which uses the graph Laplacian eigenvalues
to prove the stability of the formation. A decentralized control of vehicle formation was  presented by Lafferriere and Williams~\cite{lafferriere2005decentralized} using a consensus algorithm, where they proved the necessary and sufficient condition for the existence of a decentralized linear feedback controller.

Much of the recent research on distributed control system security in cyber-physical systems focused on electric power gird estate estimator and sensors anomaly affected by adversaries manipulating sensor measurements. Observer-based approaches have been studied for a networked power system fault detection ~\cite{aldeen2006observer,scholtz2008graphical}. An intrusion detection scheme for linear consensus network with a  misbehaving node was presented in~\cite{pasqualetti2007distributed}, where the authors used unknown input observer (UIO). Other results were presented in \cite{AndreTeixeria2010,shames2010distributed} where the authors used a bank of UIO systems fault and cyber attack detection for a network of a power system. The sufficient condition for the existence of a bank of UIOs was given as  the graph representation of the system being connected. The nodes in the network were modeled as a second-order linear time-invariant system.

A distributed real-time fault detection in a cooperative multi-agent system was presented in\cite{guo2012distributed}. The authors introduced a fault detection framework in which each node monitors its neighbors using local information. The authors in \cite{barua2011hierarchical},  introduced a fuzzy rule-based hierarchical fault detection and isolation framework  for spacecraft formation. Simple fuzzy rules were developed to describe the relationship between faults and to isolate faulty satellites in the formation. In another work \cite{daigle2007distributed}, a distributed, model-based and qualitative fault diagnosis approach for formations of mobile robots was presented. The model of the mobile robot and the communication among them were modeled as a bond graph. The authors in ~\cite{meskin2006fault}, investigated a geometric distributed FDI methodology by developing a bank of local/decentralized detection filters for detecting faults in other spacecraft, while they are in a formation flight by determining the required observability subspace of the local system. A relationship between the number of detectable malicious or faulty  nodes and the topology was investigated in \cite{sundaram2011distributed}. The authors showed that the topology of the network completely characterized the resilience of the linear iterative system. It showed that for $f$ malicious nodes, a node was able to detect the faults of all nodes if the node has at least $2f$ vertex-disjoint paths from every other non-neighboring nodes. Authors in \cite{kwon2014analysis} considered a cyber attack on the critical part of unmanned aerial system, the state estimator. They showed how a stealthy attack can fail the estate estimator without being detected by the monitoring system.

Control theory and fault detection schemes can be used in a distributed system setup to detect a malicious cyber attack on a network of UAVs in a formation flight. The main contribution will be the detection of a possible cyber attack on a network of UAVs in a formation flying setup using a bank of UIO observers. In addition to the detection of a possible cyber attack, the UIO will be used to identify the compromised UAV in the network.
The cyber attack is  modeled as a node attack and a deception attack on the communication channel while the UAVs are performing the coordinated task. Furthermore, a faulty UAV removing algorithm developed to remove the malicious or under attack UAV  from the network. The algorithm will remove the faulty UAV for a $2$-connected network while maintaining connectivity of the flying formation network and the formation setup with a missing node UAV. The key contributions of this work are a distributed UIO based cyber attack detection and a safe removal of compromised UAV from a formation flying network. While preliminary ideas and results were reported in the authors' earlier work  \cite{negash2016unknown}, this paper presents expanded theoretical results including attack isolation as well as a new set of numerical results.

The structure of the rest of the paper is as follows. In section II, the dynamics of UAVs and their communication model in the formation control is presented. In section III, a formal definition and model of a node and a communication path deception cyber attack in the formation are described. UIO based fault detection and  a compromised UAV identification with a faulty UAV removal algorithm  are presented. A simulation result of a formation control, cyber attack detection and removal of compromised UAV in the formation is presented in section IV.  Finally, in section V, a summary of the main result and some thoughts in the future direction are provided.

\section{Formation Control of UAVs} \label{sec:formation}
In this section, the formation control adapted from \cite{fax2004information} so that it will suit for the specific purpose of this paper and dynamics of the UAVs considered. Starting from here, in a formation flying the UAVs will be referred as agents or nodes interchangeably.

Consider $N$ UAVs coordinating themselves to achieve a pre-specified formation defined by relative positions with respect to each other. To describe the interaction architecture in a formal manner, consider an undirected graph $\mathcal{G}=(\mathcal{V},\mathcal{E})$, where $\mathcal{V}$
the set of nodes $\mathcal{V}=\{{v_1,...v_N}\}$   and $\mathcal{E}$ set of edge, $\mathcal{E}\subseteq\mathcal{V}\times\mathcal{V}$. The neighbors of UAV $i$ are denoted by $\mathcal{N}_i=\{j \in\mathcal{V}:(i,j)\in\mathcal{E}\}.$ Every
node represents a UAV and the edges correspond to the inter-vehicle communication.
The adjacent matrix $\mathcal{A} \in \{0,1\}^{N \times N}$ represents the adjacency relationship in the graph $\mathcal{G}$ with an element $a_{ij}=1$ if $(v_i,v_j )\in\mathcal{E}$  and $a_{ij}=0$ otherwise. The neighbor of agent $i$, denoted as $\mathcal{N}_i$ is the set of agents such that $a_{ji} = 1$. The graph Laplacian is defined as
 \begin{equation}
L_{\mathcal{G}} = \mathcal{D} - \mathcal{A}\label{eq:L_g}
\end{equation}
where $\mathcal{D}$ is a diagonal matrix with $d_{ii} $ representing the cardinality of $ \mathcal{N}_i$.

Each UAV's motion in $d$-dimensional Euclidean space is modeled as a second order system:
 \begin{equation}
 \dot{x}_i =A_i x_i+B_i u_i, \qquad x_i \in\mathbb{R}^{n} \label{eq:x_i}
 \end{equation}
where the state variable $x_i$ consists of the configuration variables (i.e., position) and their derivatives (i.e., velocity) and the control input $u_i$ represents the acceleration commands; the system matrices take the form of:
$$
A_i = {\tt diag} \left\{ \begin{bmatrix} 0 & 1 \\ \alpha_{ij} & \beta_{ij} \end{bmatrix} \right\},~~j = 1, \dots, d , \qquad B_i = I_n \otimes \begin{bmatrix} 0  \\  1 \end{bmatrix}
$$
with appropriate $\alpha_{ij} >0, \beta_{ij} \geq 0$, where $I_n$ is $n\times n$ identity matrix and $\otimes$ denotes the Kronecker product.
\iftrue
A UAV is assumed to obtain information about its motions via the local observation model:
\begin{equation}
y_i = C_i x_i. \label{eq:local_sense}
\end{equation}
For notational simplicity, the derivation hereafter considers the case where a UAV can access to its full local state, i.e., $C_i = I_{n}$, while Remark \label{rem:ci} will discuss on how the main results can extend to a more generic case.
\fi

	\begin{figure}[t]
		\centering
		\includegraphics[width=0.5\textwidth]{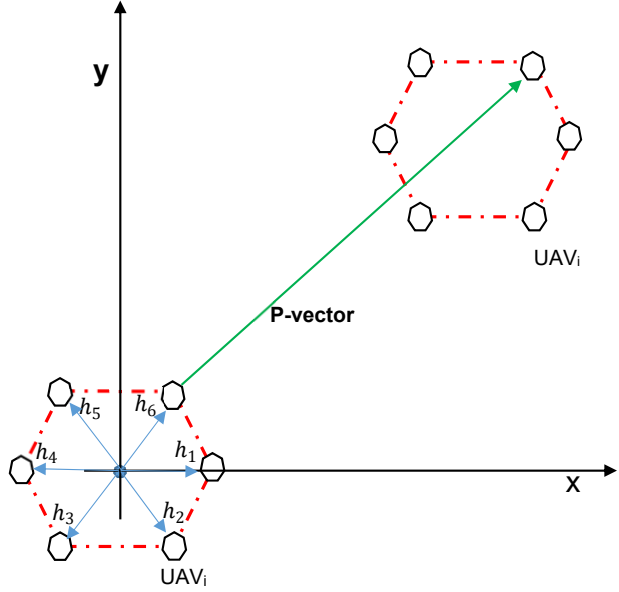}
		\caption{UAVs in a hexagonal formation for $N$ = 6.}
		\label{fig1}
	\end{figure}

Let denote $\tilde{h}_i$ as some possible desired position vector of agent $i$ for $i \in \{1, \dots N \}$. For example, if the desired formation is a planar hexagon (as in Figure \ref{fig1}), one choice of $h_i$ is:
$$
\tilde{h}_i = \begin{bmatrix} R \cos ( \pi/3 \times i ) & R \sin (\pi/3 \times i )  \end{bmatrix}^T.
$$
The UAVs are called in formation $h$ at time $t$ if there are constant vector $p$ such that
$$
x_{i,pos} - \tilde{h}_i = p, \qquad \forall i \in \{1, \dots, N\},
$$
where $x_{i,pos}$ is the $n$-dimensional vector consisting of the odd entries of $x_i$. Also, the UAVs are said to converge to the formation $h$ when the limit $\lim_{t \rightarrow \infty} x_{i,pos}(t) - \tilde{h}_i $ exists and are same for all $i \in \{ 1, \dots, N \}$~\cite{lafferriere2005decentralized}.

 Interaction of agent $i$ with other agents for formation flight is through the control input term $u_i$. To achieve formation, the feedback signal used to generate this control input is the difference between its own offset from the desired formation vector and that of the neighboring agents:
\begin{equation}
u_i = \frac{K_i}{|\mathcal{N}_i | }  \sum_{j \in \mathcal{N}_i} \left[ (x_i - h_i ) - (x_j - h_j) \right], \qquad \forall i \in \{1, \dots, N \} \label{eq:u_i}
\end{equation}
where $h_i =  \tilde{h}_i \otimes [1~0]^T$, with some feedback gain $K_i \in \mathbb{R}^{n \times 2n} $. The cardinality $|\mathcal{N}_i |$ is used for normalization purpose~\cite{fax2004information,lafferriere2005decentralized}. Since UAV motion is modeled as a second-order system with acceleration input, $K_i$ takes the form of
$$
K_i = I_n \otimes \left[ k_{i, pos}, k_{i, vel} \right].
$$
With the state equation in (\ref{eq:x_i}) and control input in (\ref{eq:u_i}), the overall closed-loop dynamics of the fleet can be written as:
\begin{equation}
\dot{x} = A x + B K L ( x- h) \label{OVR_allFormation}
\end{equation}
with the overall state $x = [x_1^T, \dots, x_N^T]^T$ and desired formation $h = [h_1^T, \dots, h_N^T]^T$, where
$$
A = I_N \otimes A_i, ~B = I_N \otimes B_i,~K = I_N \otimes K_i, ~L = L_{\mathcal{G}} \otimes I_n.
$$

\begin{theorem}[Theorem of \cite{fax2004information}]
A  controller $K$ stabilizes the formation dynamics in (\ref{OVR_allFormation}) if and only if it simultaneously stabilizes the individual $N$-UAV systems.
\begin{equation}
\begin{split}
\dot{x_i}&=A_{i}x_i+B_{i} u_i, ~~~~~  i=1,...N, \\
y_i &=C_i x_i
\end{split}
\end{equation}
\end{theorem}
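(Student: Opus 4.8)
The plan is to exploit the Kronecker-product structure of the closed-loop matrix in \eqnref{OVR_allFormation} together with the spectral decomposition of the graph Laplacian, so that the $Nn$-dimensional fleet dynamics block-diagonalizes into $N$ decoupled $n$-dimensional subsystems whose poles can be examined one at a time.

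First I would record that, since $\mathcal{G}$ is undirected, $L_{\mathcal{G}}$ is real symmetric and positive semidefinite, so there is an orthogonal matrix $T\in\mathbb{R}^{N\times N}$ with $T^{T}L_{\mathcal{G}}T = \Lambda := \mathrm{diag}(\lambda_{1},\dots,\lambda_{N})$, the $\lambda_{k}\ge 0$ being the Laplacian eigenvalues. Introduce the orthogonal coordinate change $\mathcal{T} := T\otimes I_{n}$ and set $\bar{x} := \mathcal{T}^{T}(x-h)$. Because $A = I_{N}\otimes A_{i}$, $B = I_{N}\otimes B_{i}$ and $K = I_{N}\otimes K_{i}$ all commute with $\mathcal{T}$ (mixed-product rule $(T^{T}\otimes I_{n})(I_{N}\otimes M) = (I_{N}\otimes M)(T^{T}\otimes I_{n})$) while $\mathcal{T}^{T}(L_{\mathcal{G}}\otimes I_{n})\mathcal{T} = (T^{T}L_{\mathcal{G}}T)\otimes I_{n} = \Lambda\otimes I_{n}$, a one-line computation gives
\begin{equation}
\mathcal{T}^{T}\bigl(A + BKL\bigr)\mathcal{T} \;=\; \bigl(I_{N}\otimes A_{i}\bigr) + \bigl(\Lambda\otimes B_{i}K_{i}\bigr) \;=\; \mathrm{blockdiag}\bigl(A_{i}+\lambda_{1}B_{i}K_{i},\ \dots,\ A_{i}+\lambda_{N}B_{i}K_{i}\bigr). \label{eq:blockdiag}
\end{equation}
Hence in the new coordinates the fleet decouples into the $N$ subsystems $\dot{\bar{x}}_{k} = (A_{i}+\lambda_{k}B_{i}K_{i})\bar{x}_{k}$, $k = 1,\dots,N$ (the constant $h$ contributing only an affine offset $\mathcal{T}^{T}Ah$ that is irrelevant to internal stability, and $C_{i}=I_{n}$ meaning the full state is measured so the output equation plays no role).

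Next I would use the elementary fact that the spectrum of a block-diagonal matrix is the union of the spectra of its blocks: $A+BKL$ is Hurwitz on the pertinent invariant subspace if and only if each $A_{i}+\lambda_{k}B_{i}K_{i}$ is Hurwitz. But ``$K$ simultaneously stabilizes the individual $N$-UAV systems'' is exactly the assertion that, for every eigenvalue $\lambda_{k}$, the feedback $u_{i} = \lambda_{k}K_{i}x_{i}$ stabilizes $\dot{x}_{i} = A_{i}x_{i} + B_{i}u_{i}$, i.e.\ makes $A_{i}+\lambda_{k}B_{i}K_{i}$ Hurwitz. Matching the two conditions block-by-block gives the equivalence in both directions, and nothing beyond \eqnref{eq:blockdiag} and this spectral remark is actually required --- in particular the specific second-order form of $A_{i}$, $B_{i}$ is not used.

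The step I expect to need the most care is the zero Laplacian eigenvalue. For a connected graph $\lambda_{1}=0$ is simple and its block is $\dot{\bar{x}}_{1} = A_{i}\bar{x}_{1}$; since the open-loop double integrator $A_{i}$ is not Hurwitz, $A+BKL$ is never asymptotically stable in the absolute sense, so ``stabilizing the formation'' must be read modulo the one-dimensional-per-axis common-position/common-velocity subspace spanned by that eigenvector --- it is the relative-position error that converges, precisely as in the definition of ``converging to the formation $h$'' stated earlier. On the orthogonal complement of $\ker L_{\mathcal{G}}$, \eqnref{eq:blockdiag} restricts to $\mathrm{blockdiag}(A_{i}+\lambda_{k}B_{i}K_{i})_{k\ge 2}$, so the nontrivial blocks are $k = 2,\dots,N$ and the benign $k=1$ mode is simply excluded from the ``stabilization'' requirement. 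Making this interpretation explicit, and checking that the affine $h$-term really does not affect the conclusion, is the only genuinely delicate point; the rest is routine Kronecker bookkeeping.
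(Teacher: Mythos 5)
Your proof is correct and follows essentially the same route as the paper: a similarity transformation built from the eigenstructure of $L_{\mathcal{G}}$ reduces $A+BKL$ to the blocks $A_{i}+\lambda_{k}B_{i}K_{i}$, whose spectra are then matched block-by-block against the stabilization condition for the individual systems. The only differences are minor---you orthogonally diagonalize the symmetric (undirected-graph) Laplacian where the paper uses a Schur triangularization (which would also cover directed graphs, yielding an upper-triangular rather than block-diagonal form but the same diagonal blocks), and your explicit handling of the $\lambda_{1}=0$ mode and the affine $h$-offset is actually more careful than the paper's, which passes over both points.
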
	
\begin{proof}
	Let $M$ be a Schur transformation matrix of $L_\mathcal{G}$  where $\tilde{L}_{\mathcal{G}} =M^{-1} L_\mathcal{G} M$ is upper triangular\cite{horn2012matrix}.
	The diagonal entries of $\tilde{L}_{\mathcal{G}}  $ are the eigenvalues of $L_{\mathcal{G}} $. Clearly $M\otimes I_{n}$ transforms $L_{\mathcal{G}}\otimes I_{n}$ into $\tilde{L}_{\mathcal{G}}  \otimes I_{n}$.
	Calculating directly,
	\begin{equation}
	(M^{-1}\otimes I_{n})(A+BKL)(M\otimes I_{n})=I_N\otimes A_{i}+\tilde{L}_{\mathcal{G}}\otimes B_{i}K_{i}
	\end{equation}
	The right-hand-side is an upper triangular and the $N$ diagonal subsystems are of the form:
	\begin{equation}
	A_{i}+\lambda_i B_{i}K_{i}
	\end{equation}
	where $\lambda_i$  is an eigenvalue of $L_{\mathcal{G}}$. There is one block for each eigenvalue of the Laplacian. Therefore, the eigenvalues of  $A+BKL$ are those of $A_{i}+\lambda_i B_{i}K_{i}$ where $\lambda_i$ is the eigenvalue of $L_{\mathcal{G}}$ corresponding to UAV $i$.	Thus, the stability analysis of the $N$ formation UAVs can be achieved by analyzing the stability of a single $UAV_i$ with the same dynamics modified by scaler representing the interconnection Lplacian eigenvalues. Consequently, designing the feedback gain $K_{i}$, stabilizing the single vehicle, scaled by the eigenvalue of the Laplacian  leads to a stable formation.
\end{proof}

\section{Distributed Cyber Attack Detection and Isolation} \label{sec:detection}

In this section, the main methodology for detection and isolation of cyber attacks over a network of UAVs in formation control is presented.  The method takes advantage of fault detection and isolation (FDI) schemes for handling sensor-actuator faults of dynamic systems. The effect of cyber attack on a networked cyber-physical systems is in essence the inability of a certain component of over the network; thus, can be treated as a fault in the input/ouput elements of the overall system~\cite{amin2013cyber,pasqualetti2012attack}.  A specific model-based diagnostic scheme, called unknown input observer, is considered to generate the residual signals for checking the presence of faults. The UIO-based diagnostic scheme is used to detect a class of adversarial scenarios based on a generalized fault model. The attacks dealt herein are communication network-induced deception attack and node attack.

\subsection{Unknown Input Observer}
In model-based FDI system, a residual, which is generated as the difference between the measurement and estimate of the states of the process, is used as an indicator of a presence of a fault. The residual should be close to zero if and only if a fault does not occur in the system. This section briefly summarizes the unknown input observer (UIO) scheme~\cite{Chen1999} for fault diagnosis of a linear system. .

The UIO considers a fault-free system in the form of the following linear time-invariant system:
\begin{equation}
\begin{split}
\dot{x}(t)&=Ax(t)+Bu(t)+Ed(t) \\
y(t)&=Cx(t)
\end{split} \label{eq:unknownInput}
\end{equation}
where $x\in\mathbb{R}^n$, $u\in\mathbb{R}^m$ are the state and known input vectors, respectively; $d\in\mathbb{R}^n$ is the unknown input vector and the associate input matrix $E$ is of full column rank. In the presence of a fault, the system dynamics is given by:
\begin{equation}
\begin{split}
\dot{x}(t)&=Ax(t)+Bu(t)+B_f f(t) +Ed(t) \\
y(t)&=Cx(t)
\end{split} \label{eq:unknownInputwithFault}
\end{equation}
where $f(t)$ is an unknown \textit{scalar} time-varying function representing evolution of the fault and/or attack. The fault distribution matrix $B_f$ is assumed to be of full column rank.

A UIO for the system in (\ref{eq:unknownInput}) is given as
\begin{equation}
\begin{split}
\dot{z}(t)&=Fz(t)+TBu(t)+Py(t)\\
\hat{x}(t)&=z(t)+Hy(t)
\end{split} \label{eq:obs}
\end{equation}
   where $\hat{x}$ is the estimated state vector and $z\in\mathbb{R}^n$ denotes the state variables of the observer. 	The matrices in the above observer equations must be designed in such a way that it can achieve decoupling from the unknown input and meet the stability requirement of the observer. To achieve these condition, choose the matrices $F,T,P$ and $H$ satisfying the following conditions
\begin{equation}
\begin{split}
	(HC-I)E&=0\\
	T&=(I-HC)\\
	F&=(A-HCA-P_1C)	\\
	P_2&=FH
\end{split}  \label{eq:obsCondsMatric}
\end{equation}
	where $P=P_1+P_2$.
	The state estimation error dynamics will then be:
	\begin{equation}
	\dot{e}=Fe(t)
	\end{equation}
	where $F$ is chosen so that all eigenvalues are stable, $e(t)$ will approach zero asymptotically,
	$\lim_{t \to+ \infty} e(t) = 0 $, regardless of the values of the unknown input $d(t)$.
\begin{definition}
	The state observer in (\ref{eq:obs}) is called an unknown input observer (UIO) if its state estimation error
	vector $e(t)=x(t)-\hat{x}(t)$ approaches zero asymptotically regardless of the unknown input $d(t)$.
\end{definition}
	\begin{theorem} \label{theo:2}
	\cite{Chen1999} The necessary and sufficient conditions for an UIO described
	by (\ref{eq:obs}) to be an observer for the system (\ref{eq:unknownInput}) are:
	\begin{enumerate}[a.]
		\item rank $(CE)=$ rank$(E)$
		\item $(C,A_1)$ is detectable pair where $A_1=A-HCA$
	\end{enumerate}
For the proof, reader advised to look into \cite{Chen1999}.
	\end{theorem}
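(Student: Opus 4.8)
The plan is to reduce the statement to two classical facts: the solvability of a linear matrix equation (which yields condition (a)) and stabilization of an observer error matrix by output injection (which yields condition (b)).

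First I would compute the estimation-error dynamics directly. Setting $e(t)=x(t)-\hat x(t)$ and using $\hat x = z + Hy = z + HCx$ gives $e = (I-HC)x - z$, which under the constraint $T=I-HC$ reads $e = Tx - z$. Differentiating and substituting $\dot x = Ax+Bu+Ed$ from (\ref{eq:unknownInput}) together with $\dot z = Fz + TBu + Py$ from (\ref{eq:obs}), the input terms $Bu$ cancel because $T=I-HC$; the disturbance contribution reduces to $TEd$ and vanishes exactly when $(HC-I)E=0$; and regrouping the remaining $x$-dependent terms shows that $\dot e = Fe$ holds if and only if $F = A - HCA - P_1C$, $P_2 = FH$ and $P = P_1 + P_2$ — i.e.\ precisely the algebraic relations (\ref{eq:obsCondsMatric}). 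Consequently, \emph{under} those relations the observer (\ref{eq:obs}) is a UIO for (\ref{eq:unknownInput}) if and only if $F$ is Hurwitz, so the whole problem reduces to: choose $H$ and $P_1$ such that $(HC-I)E=0$ and $F=(A-HCA)-P_1C$ is Hurwitz.

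Second, I would analyze the decoupling equation $HCE=E$. It is linear in the unknown $H$, and by the standard solvability criterion for an equation $XM=N$ (here $X\leftrightarrow H$, $M\leftrightarrow CE$, $N\leftrightarrow E$) a solution exists iff $\operatorname{rank}\!\begin{bmatrix} E \\ CE\end{bmatrix}=\operatorname{rank}(CE)$. Since $E$ has full column rank the stacked matrix has rank $\operatorname{rank}(E)$, so the criterion collapses to $\operatorname{rank}(CE)=\operatorname{rank}(E)$, which is (a); moreover one can exhibit the explicit solution $H=E(CE)^{+}$, checking $HCE=E$ from the fact that $\operatorname{rank}(CE)=\operatorname{rank}(E)$ forces the row spaces of $CE$ and $E$ to coincide. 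This shows (a) is necessary and sufficient for unknown-input decoupling alone. Fixing such an $H$ and writing $A_1 = A-HCA$, the matrix $F = A_1 - P_1C$ is the error matrix of an ordinary Luenberger observer with injection gain $P_1$; hence a $P_1$ rendering $F$ Hurwitz exists iff $(C,A_1)$ is detectable — condition (b) — and once $P_1$ is chosen, $F$, then $P_2=FH$, then $P=P_1+P_2$, then $T=I-HC$ are all determined and, by the first step, form a genuine UIO. Chaining the two reductions gives the equivalence.

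The step I expect to require care is the interplay between the two conditions through the \emph{non-unique} matrix $H$: the equation $HCE=E$ has a whole affine family of solutions, each yielding a possibly different $A_1=A-HCA$, so to state (b) unambiguously one should either argue that detectability of $(C,A-HCA)$ is independent of the admissible choice of $H$ (it is controlled by the invariant zeros of the triple $(A,E,C)$ associated with the unknown-input channel, which do not depend on $H$) or, as in \cite{Chen1999}, fix the canonical representative $H=E(CE)^{+}$ and read (b) off it. Everything else is routine linear algebra together with the standard detectability-based observer synthesis.
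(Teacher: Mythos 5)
The paper does not prove this theorem at all --- it imports the statement from the cited reference and explicitly defers the proof there --- so there is no in-paper argument to compare against. Your proposal is correct and is, in substance, the standard proof from that reference: the error-dynamics computation that forces the constraints $T=I-HC$, $(HC-I)E=0$, $F=A-HCA-P_1C$, $P_2=FH$; the solvability analysis of the linear equation $HCE=E$ (using that $E$ has full column rank) yielding condition (a) with the canonical solution $H=E(CE)^{+}=E\left[(CE)^{T}CE\right]^{-1}(CE)^{T}$; and the reduction of stabilizing $F=A_1-P_1C$ to detectability of $(C,A_1)$, yielding condition (b). Your closing caveat about the non-uniqueness of $H$ is the right thing to worry about and is resolved exactly as you suggest, by fixing the canonical representative (equivalently, by noting the condition is governed by the invariant zeros of $(A,E,C)$, which are independent of the admissible $H$).
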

	
\subsection{Cyber Attack on Formation}
For the network of UAVs in formation control using the method described in section \ref{sec:formation}, UIOs can be utilized as a mechanism to detect a possible cyber attack. Since there is no central agent who can gather all the state information of the UAVs, the UAVs should be able to detect cyber attacks relying only on the local communication with their neighbors. One way this work proposes to facilitate this distributed detection is for every UAV to have a \textit{bank} of UIOs each of which is associated with a particular attack origin. Then, they consequently coordinate a corrective action in the network.  Two attack types, node and communication deception attack, are modeled as an unknown disturbance in the UAV's dynamics. The bank of UIOs generates a structured set of residuals where each residual is decoupled from one and only one fault but sensitive to all other faults.

\subsubsection {Node attack}
Assuming the $k^{th}$ UAV is affected by an outside malicious agent and compromised as a unit. The UAV control input is corrupted and this UAV is no longer following the system-wide distributed control to perform the formation. This can be due to the incoming measurements of neighbors' states being affected by  DoS-type  attack \cite{teixeira2012attack,amin2009safe} or compromised at the signal receiver module of the UAV and making the UAV react to the compromised input. In this case, the attack affects the  $k^{th}$ node system state dynamics directly.

Such an attack is modeled as a disturbance to the system dynamics of node $k$:
\begin{equation}
\begin{split}
\dot{x}_k&=A_{k}x_k+B_{k}u_k +b_f^kf_k \label{eq:k nodeAtt dym}\\
y_k&=C_kx_k
\end{split}
\end{equation}
where $b_f^k\in\mathbb{R}^n$ is the  distribution vector and $f_k\in\mathbb{R}$ be the disturbance signal. The detection scheme employed on $i^{th}$ UAV, the global dynamics of the system can be described as (\ref{eq:k nodeFault}), and the system to be monitored at node $i$ with all possible faults in the formation is
\begin{align}
\dot{x}&=(A+BKL)x-BKLh+B_ff \label{eq:k nodeFault}\\
y_i&=\bar{C}_ix \nonumber
\end{align}
where $f=[f_1...f_N]^T$ and  $B_f\in\mathbb{R}^{Nn\times N}$ is a block diagonal matrix  in terms of ${b_f^i}$ elements.
Fault can be rewritten so that the effect of the fault in the $k^{th}$ UAV is evident 	
\begin{align}
	\dot{x}&=(A+BKL)x-BKLh+B_f^k f_k + B_f^{\bar{k}} f_{\bar{k}} \label{eq:k nodeIsoFault}\\
	y_i&=\bar{C}_ix \nonumber
	\end{align}
	where $B_f^k$ is the $k^{th}$ column of $B_f$, $f_k$ is the $k^{th}$ component of $f$, $B_f^{\bar{k}}$ is
	$B_f$ with the $k^{th}$ column deleted and $f_{\bar{k}}$ is the fault vector with the $k^{th}$ component removed.
	In order to make the observer insensitive to the $f_k$, this fault is regarded as an unknown input where $B_f^k$ is analogous to $E$ in (\ref{eq:unknownInput}). $B_f^k = [{b_f^k}^T ~ 0_{1  \times (N-1)n)}]^T$  where $b_f^k$ is an $n$ dimensional vector will all zero entries except one that corresponds a single fault of the faulty UAV $k$ and $\bar{C}_i = [C_i ~ 0_{(N-1)n \times (N-1)n}]$. The UIO implmented at the $i^{th}$ UAV, decoupled from $f_k$ and made insensitive to a disturbance in $k^{th}$ UAV, has the following dynamics:
	\begin{align}
	\dot{z}_i^k&=F_i^kz_i^k+T_i^kBu+P_i^ky_i \label{eq:k_UIO}\\
	\hat{x}_i^k&=z_i^k +H_i^ky_i \nonumber
	\end{align}
	with $\hat{x}_i^k\in\mathbb{R}^{nN}$ being the estimate of the $N$ UAVs' states insensitive to a fault in the $k^{th}$ UAV,
	which is computed by $i^{th}$ UAV. Since the formation network is only running the consensus algorithm, we consider the closed loop dynamics (\ref{eq:k nodeIsoFault}), with $u=0$ and hence, to incorporate in the observer design we take $B=0$.
	
	The above UIO exists if and only if it satisfies the Theorem \ref{theo:2}  conditions. The disturbance in the $k^{th}$  UAV (node) represented as unknown input by setting $E=B_f^k$, which is the $k^{th}$ column of $B_f$.

	\textbf{Detecting the Node attack }

	
	\begin{corollary} \label{corol:1}
		There exists a UIO for the system $((A+BKL),B_f^k, \mathcal{N}_i)$, if the following conditions are satisfied:

	\begin{enumerate}[a.]
		\item rank $(\bar{C}_iB_f^k)=$ rank$(B_f^k)$
		\item $(\bar{C},A_1)$ is detectable pair where $A_1=A-H \bar{C}A$
	\end{enumerate}
	\end{corollary}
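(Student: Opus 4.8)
The plan is to obtain this statement as a direct specialization of Theorem~\ref{theo:2} to the node-$i$ monitoring model. First I would fix the correspondence between the generic UIO plant \eqnref{eq:unknownInput}--\eqnref{eq:unknownInputwithFault} and the monitored formation system \eqnref{eq:k nodeIsoFault}. Under the pure consensus regime there is no known control input, so we set $B=0$; this removes both the $Bu$ term and the affine term $BKLh$, leaving $\dot x = (A+BKL)x + B_f^k f_k + B_f^{\bar k} f_{\bar k}$ with output $y_i = \bar C_i x$. We then read off the identifications: the closed-loop matrix $A+BKL$ plays the role of ``$A$'', the single column $E := B_f^k$ plays the role of the unknown-input matrix, the scalar $f_k$ plays the role of the unknown input $d(t)$, and $\bar C_i$ plays the role of $C$; the remaining fault term $B_f^{\bar k} f_{\bar k}$ does not enter the existence conditions at all (it is precisely what the residual is meant to reveal). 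Note that $B_f^k = [\,(b_f^k)^T\ 0_{1\times(N-1)n}\,]^T$ is a nonzero column and hence of full column rank, as the hypothesis on $E$ in \eqnref{eq:unknownInput} requires. With this dictionary in place, the matrix called ``$A$'' in condition~(b) of the statement is to be read as the closed-loop matrix $A+BKL$, and ``$\bar C$'' as $\bar C_i$.

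Second, I would invoke the construction underlying Theorem~\ref{theo:2} to exhibit the UIO matrices of \eqnref{eq:k_UIO} explicitly, which at the same time proves the ``if'' direction. Condition~(a), $\mathrm{rank}(\bar C_i B_f^k) = \mathrm{rank}(B_f^k)$, is exactly the solvability condition for $(H\bar C_i - I)B_f^k = 0$; it guarantees a decoupling matrix $H_i^k$, e.g. $H_i^k = B_f^k\,\bigl((\bar C_i B_f^k)^T \bar C_i B_f^k\bigr)^{-1}(\bar C_i B_f^k)^T$. Setting $T_i^k = I - H_i^k \bar C_i$ and $A_1 = T_i^k (A+BKL) = (A+BKL) - H_i^k \bar C_i (A+BKL)$ as in \eqnref{eq:obsCondsMatric}, condition~(b) --- detectability of $(\bar C_i, A_1)$ --- is precisely what permits choosing $P_1$ so that $F_i^k = A_1 - P_1 \bar C_i$ is Hurwitz; the remaining matrices $P_2 = F_i^k H_i^k$ and $P_i^k = P_1 + P_2$ are then determined. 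With these choices the error $e = x - \hat x_i^k$ obeys $\dot e = F_i^k e$ independently of $f_k$, so $e\to 0$ asymptotically in the absence of the other faults, which is the defining property of a UIO in Definition~1. Since Theorem~\ref{theo:2} records that (a)--(b) are necessary as well as sufficient, nothing more is required.

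There is no genuine mathematical obstacle here beyond modeling bookkeeping; the only points that need care are (i) verifying that dropping $B$ (legitimate under pure consensus) also eliminates the known affine term $BKLh$, so that decoupling from $f_k$ is unobstructed, and (ii) the mild notational reinterpretation of ``$A$'' and ``$\bar C$'' in the statement as the closed-loop quantities $A+BKL$ and $\bar C_i$. It is worth remarking in closing that condition~(a) is the substantive requirement: it asserts that node $i$, through the neighborhood measurements encoded in $\bar C_i$, must ``see'' the fault channel $b_f^k$ of node $k$, and this is what links the existence of the bank of UIOs to the interconnection topology $\mathcal G$.
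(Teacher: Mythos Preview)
Your argument is correct for the corollary as literally stated: conditions (a) and (b) are verbatim the hypotheses of Theorem~\ref{theo:2} under the substitutions $E\leftarrow B_f^k$, $A\leftarrow A+BKL$, $C\leftarrow \bar C_i$, so the conditional ``if (a) and (b) then a UIO exists'' is just an instantiation of that theorem, and your explicit construction of $H_i^k,T_i^k,F_i^k,P_i^k$ together with the resulting error dynamics $\dot e=F_i^k e$ is the standard one.

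The paper, however, takes a genuinely different route. Rather than re-deriving the conditional, its proof \emph{verifies} that (a) and (b) are actually satisfied by the node-attack monitoring model, so that the corollary effectively asserts unconditional existence of the UIO. For (a) it uses the particular structure $B_f^k=[\,(b_f^k)^T\ 0_{1\times (N-1)n}\,]^T$ with $b_f^k$ a canonical unit vector, together with $C_i=I_n$, to read off that $\bar C_iB_f^k$ is a unit vector and hence $\mathrm{rank}(\bar C_iB_f^k)=1=\mathrm{rank}(B_f^k)$. For (b) it invokes the transmission-zero characterization of detectability and argues that
\[
\mathrm{rank}\begin{bmatrix} sI_{nN}-(A+BKL) & B_f^k\\ \bar C_i & 0\end{bmatrix}
=\mathrm{rank}\begin{bmatrix} sI_{nN}-(A+BKL)\\ \bar C_i\end{bmatrix}+\mathrm{rank}(B_f^k)=nN+1
\]
for all $s$ with $\mathfrak{Re}(s)\ge 0$, using Hurwitz stability of the closed loop $A+BKL$. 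Thus your proof buys conceptual clarity (it makes transparent that nothing beyond Theorem~\ref{theo:2} is needed once the dictionary is set), while the paper's proof buys content: it answers the question your closing remark raises but leaves open, namely \emph{why} the rank and detectability hypotheses hold for the specific $(\bar C_i,B_f^k)$ arising from the formation topology.
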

\begin{proof}
	we have to show that  $$ rank (\bar{C}_iB_f^k)= rank(B_f^k)=1$$
	Denoting the row of $\bar{C}_i$ that reads the output of UAV $k$ be $\bar{c}_i^k$. It is obvious that $\bar{c}_i^k B_f^k =1$
	since $B_f^k$ is a vector with all entries are 0 except $k^{th}$ is 1. Therefore, with the same token  $\bar{C}_iB_f^k$
	is a vector with all entries are 0 except $k^{th}$ is 1, thus making its rank equal to 1.

	The second condition of corollary \ref{corol:1}, is equivalent to a condition that the transmission zeros from the unknown
	inputs to the measurement must be stable~\cite{Chen1999}, i.e. the matrix

$	rank
	\begin{bmatrix}
		sI_{nN}-(A+BKL)&B_f^k\label{eq:com_attModel}\\
	\bar{C}_i&0\\
	\end{bmatrix}
$
is of full rank for all $s$ such that $\mathfrak{Re}(s)\geq 0$, this can be proved as follows:\\
$$
rank	
\begin{bmatrix}
	sI_{nN}-(A+BKL)&B_f^k\\
		\bar{C}_i&0\\
	\end{bmatrix}
	=
	rank\begin{bmatrix}
	sI_{nN}-(A+BKL)\\
		\bar{C}_i\
	\end{bmatrix}
+
	rank  (B_f^k),
$$
	where a stable closed loop system $(A+BKL)$, is a full column rank. \\
Therefore,
$
rank	
\begin{bmatrix}
sI_{nN}-(A+BKL)&B_f^k\\
\bar{C}_i&0\\
\end{bmatrix}
= nN+1
$ 		
\end{proof}

	Once the existence of the UIO, from the system dynamics in (\ref{eq:k nodeIsoFault}) and observer dynamics from (\ref{eq:k_UIO}) are verified, it is easy to drive the error dynamics and the residual as
		\begin{align}
	\dot{e}_i^k&=F_i^k e_i^k-T_i^kB_f^{\bar{k}} f_{\bar{k}} \label{eq:k_errorDyn}\\
	\ r_i^k&=C_ie_i^k \nonumber
	\end{align}
	where $f_{\bar{k}}$ is obtained by removing the $k^{th}$ fault element of $f$. Note that the residual dynamics are driven by all except the $k^{th}$ fault if $T_iB_f^{\bar{k}} \neq 0$ for $ i \neq k$, making the residual sensitive to all but the $k^{th}$ fault.
	\\

		The bank of UIO observers at UAV $i$ generates residual signals for each of its neighbors $\mathcal{N}_i$.
	Since $B_f^{\bar{k}}$ has full column rank, the UIO  residual $r_i^k$ is insensitive only to $f_k$, treating it as unknown
	input. With this in mind, the following threshold $(T)$ logic can be set:
\begin{algorithmic}
		\IF{$ \|r_i^k  \|< T_{fk} $     ,$\forall k\in \mathcal{N}_i$ }
		\STATE No fault in the neighbor
		\ELSIF{$\| r_i^k \| < T_{fk} $    ,$\forall k \neq j \in \mathcal{N}_i$ and       $ \|r_i^j \|\geq T_{f_k}$    ,$\forall j\neq k  \neq j \in \mathcal{N}_i$}
		\STATE Fault in the neighbor node $k$
	
		\ENDIF
	\end{algorithmic}

\subsubsection{Attack on the outgoing communication of a  Node }
The UAVs under formation flight are in constant communication with their neighbors to compute their relative distance. In this scenario,  the $k^{th}$ UAV's outgoing signals are attacked by exogenous input or corrupted by a comunication network-induced deception attack while its control inputs are computed correctly \cite{kim2012cyber}. This scenario covers a DoS attack,  malicious data or noise injection to the UAVs connected to this $k^{th}$ UAV node. The attack can be modeled in the UAV dynamics as a sensor fault on the information broadcasted from this affected UAV node. Since this UAV is unaware of its outgoing information being corrupted, here the two measurements, internal measurement $(\vartheta_k)$ and the broadcast signal $(y_k)$, are isolated as stated in the system dynamics equation (\ref{eq:k node_out}) below

\begin{align}
\dot{x}_k&=A_kx_k+B_ku_k \label{eq:k node_out}\\
\vartheta_k&=C_kx_k \nonumber\\
y_k&=C_kx_k+C_f^kf_k \nonumber
\end{align}
where $ \vartheta_k\in\mathbb{R}^n$ is the internal measurement, $f_k\in\mathbb{R}^n$ being the corrupted broadcast information. The  closed loop dynamics can be written as
\begin{align}
\dot{x}&=(A+BKL)x-BKLh+I_{\bar{k}}\Gamma^kf_k \label{eq:k node_out_overall}\\
\vartheta&=Cx \nonumber\\
y&=Cx+\bar{C}_f^k f_k \nonumber
\end{align}
where  $y\in\mathbb{R}^{Nn \times N}$ is the communicated measurement and the corrupted feed to the network is $I_{\bar{k}}\Gamma^k$ where
$I_{\bar{k}}$  is obtained from identity matrix $I_{nN}$ with the $k^{th}$ diagonal element replaced with  block of $0_{n\times 1}$
and $\Gamma^k$ is the $k^{th}$ column of the $KLC$ matrix  to account for the internal measurement of $k^{th}$ UAV not being affected.
A UAV node $k$ will distinguish between an attack on the node itself and the outgoing communication based on  its internal
 measurement.

\begin{corollary}
	There exists a UIO for the system $((A+BKL),I_{\bar {k}} \Gamma^k,\bar{C}_i)$, if the graph $\mathcal{G}$ is connected.
\end{corollary}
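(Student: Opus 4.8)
The plan is to verify, for the triple $\big((A+BKL),\,I_{\bar k}\Gamma^k,\,\bar C_i\big)$, the two requirements of Theorem~\ref{theo:2} with unknown-input matrix $E=I_{\bar k}\Gamma^k$, output matrix $C=\bar C_i$ and dynamics matrix $A+BKL$ in place of $A$: namely the rank condition $\operatorname{rank}(\bar C_i\,I_{\bar k}\Gamma^k)=\operatorname{rank}(I_{\bar k}\Gamma^k)$ and detectability of $(\bar C_i,A_1)$ with $A_1=(A+BKL)-H\bar C_i(A+BKL)$. The whole argument runs parallel to the proof of Corollary~\ref{corol:1}, the only change being that the outgoing-communication channel $I_{\bar k}\Gamma^k$ takes the place of the node channel $B_f^k$, so the real work is in accommodating the different block pattern of the disturbance matrix.

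I would dispatch detectability first, exactly as for Corollary~\ref{corol:1}. Because the formation gain $K$ is chosen so that $A+BKL$ is Hurwitz (cf.\ the stabilization theorem of \cite{fax2004information}), $sI_{nN}-(A+BKL)$ has full column rank for every $s$ with $\mathfrak{Re}(s)\ge 0$; hence the Rosenbrock pencil $\bigl[\begin{smallmatrix} sI_{nN}-(A+BKL) & I_{\bar k}\Gamma^k\\ \bar C_i & 0\end{smallmatrix}\bigr]$ is of full column rank on the closed right half-plane, which is precisely the transmission-zero form of detectability of $(\bar C_i,A_1)$. Connectivity of $\mathcal G$ is needed here only implicitly: a consensus-type feedback cannot make $A+BKL$ Hurwitz unless $\mathcal G$ is connected.

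The substantive step is the rank condition, and here I would exploit the structure of $\Gamma^k$, the $k$-th (block) column of $KLC$. Since $L_{\mathcal G}$ has $(j,k)$-entry $-a_{jk}$ for $j\ne k$ and $d_{kk}$ for $j=k$, every off-diagonal block-row of $\Gamma^k$ equals $-a_{jk}$ times one and the same matrix $G$ (a fixed matrix built from the per-vehicle gain acting through the local output), the $k$-th block-row being $d_{kk}G$. Pre-multiplication by $I_{\bar k}$ deletes the $k$-th block-row, so the nonzero block-rows of $I_{\bar k}\Gamma^k$ are exactly those indexed by $\mathcal N_k$, each equal to $-G$, whence $\operatorname{rank}(I_{\bar k}\Gamma^k)=\operatorname{rank}G$. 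Now, since UAV $i$ is the one monitoring node $k$ we have $k\in\mathcal N_i$, and as $\mathcal G$ is undirected also $i\in\mathcal N_k$; therefore the $i$-th block-row of $I_{\bar k}\Gamma^k$ --- the one that $\bar C_i$ picks up through UAV $i$'s own output --- is the nonzero matrix $-G$. Consequently the block-rows retained by $\bar C_i$ span the same row space as all of $I_{\bar k}\Gamma^k$, i.e.\ $\operatorname{rank}(\bar C_i\,I_{\bar k}\Gamma^k)=\operatorname{rank}G=\operatorname{rank}(I_{\bar k}\Gamma^k)$, which is condition (a). Connectedness of $\mathcal G$ guarantees that every potential victim node has at least one monitoring neighbour, so such an $i$ exists for each $k$, and Theorem~\ref{theo:2} then guarantees a UIO for $((A+BKL),I_{\bar k}\Gamma^k,\bar C_i)$.

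The step I expect to be the main obstacle is precisely that last piece of bookkeeping: the definition of $I_{\bar k}$ deliberately annihilates the $k$-th block --- exactly the broadcast signal $y_k$ on which a naive estimator would rely --- so one must show the deception still leaves a detectable imprint in $\bar C_i$, which works only because the monitoring node $i$ is adjacent to the victim $k$ in the connected undirected graph. Making this airtight requires fixing the exact dimensions and the interplay of $\Gamma^k$, $I_{\bar k}$, $\bar C_i$ and $KLC$, and checking that no cancellation occurs among the $\pm G$ blocks that $\bar C_i$ keeps.
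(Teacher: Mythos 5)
Your proposal follows the same overall skeleton as the paper's proof --- verify the two conditions of Theorem~\ref{theo:2} for the triple with $E=I_{\bar k}\Gamma^k$, handling detectability via full column rank of the Rosenbrock pencil on the closed right half-plane (using that $A+BKL$ is stabilized) --- but the rank condition is argued by a genuinely different route. The paper asserts that the row $\bar c_i^k$ of $\bar C_i$ ``that reads the output of node $k$'' satisfies $\operatorname{rank}(\bar c_i^k I_{\bar k}\Gamma^k)=\operatorname{rank}(I_{\bar k}\Gamma^k)=1$, and for the pencil it invokes invertibility of principal submatrices of a connected graph's Laplacian \cite{barooah2006graph} to get column independence; this is terse and, taken literally, problematic, since $I_{\bar k}$ annihilates precisely the $k$-th block, so a row reading node $k$'s broadcast contributes nothing. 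You instead unpack the Kronecker structure of $KLC$ to show that the surviving block-rows of $I_{\bar k}\Gamma^k$ are the copies of $-G$ indexed by $\mathcal N_k$, and then use $i\in\mathcal N_k$ (undirectedness plus the fact that $i$ only monitors its neighbors) to show $\bar C_i$ retains one of these nonzero blocks, so no rank is lost. This buys a more transparent and arguably more correct justification of condition (a), and it makes explicit where connectivity/adjacency actually enters; what it costs is that it leans on the specific consensus-gain structure $KL=L_{\mathcal G}\otimes K_i$, whereas the paper's (sketchier) argument is phrased at the level of generic Laplacian submatrix properties. One small point to tighten: your detectability step, like the paper's, should not treat the pencil rank as a sum of block ranks --- you need the column space of $\bigl[\begin{smallmatrix}I_{\bar k}\Gamma^k\\ 0\end{smallmatrix}\bigr]$ to be independent of that of the first block column, which is exactly where your structural description of $I_{\bar k}\Gamma^k$ (or the paper's appeal to \cite{barooah2006graph}) must be brought back in.
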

\begin{proof}
	First note that $I_{\bar{k}} \Gamma^k$, is the $k^{th}$ column of  $KLC$ with $k^{th}$ entry set to zero, where both $K$ and $\bar{C_i}$ are full rank matrices. Therefore, if $\mathcal{G}$	is connected, node $k$ has at least one neighbor. Denoting the row of $\bar{C}_i$ that reads the output of node $k$ as $\bar{c}_i^k$ and rank of $(\bar{c}_i^kI_{\bar{k}}\Gamma^k )=$ rank $(I_{\bar{k}}\Gamma^k)=1$.
	
	The second condition of Theorem \ref{theo:2} is the detectability of a fault. It can be stated as: a fault is detectable if the transfer function of scalar $m$  faults $f_k(t)=[f_1(t),...f_m(t)] $ to $y(t)$ is not identical to zero, i.e. the rank of
$
		\begin{bmatrix}
			sI_{nN}-(A+BKL)&I_{\bar{k}}\Gamma^k\label{eq:com_attModel}\\
			\bar{C}_i&0\\
		\end{bmatrix}
$	
	$= nN+m$, for all $s$ such that $\mathfrak{Re}(s)\geq 0$. Here we deal with a single fault occurrence at a time, $m=1$.
		Rank	
$
\begin{bmatrix}
	sI_{nN}-(A+BKL)&I_{\bar{k}}\Gamma^k\\
	\bar{C}_i&0\\
\end{bmatrix}
$
$=$
$rank\begin{bmatrix}
	sI_{nN}-(A+BKL)\\
	\bar{C}_i\
\end{bmatrix}$
$+$
rank $ (I_{\bar{k}}\Gamma^k)= nN+1$
where a stable closed loop system $(A+BKL)$, is a full column rank. $ I_{\bar{k}}\Gamma^k$ matrix is a principal submatrix of the graph Laplacian. In \cite{barooah2006graph} it was shown that any principal matrix of a connected undirected graph Laplacian matrix is invertible and so the last column is independent of the rest.   \\
Therefore, rank	
$
\begin{bmatrix}
	sI_{nN}-(A+BKL)&I_{\bar{k}}\Gamma^k\\
	\bar{C}_i&0\\
\end{bmatrix}
$
$= nN+1$, is full column rank.	
\end{proof}
\subsection{UAV Under Attack Removal}
The  main role of FDI system here is to provide information about possible cyber attack or faults (detection) in the system  and determine the location of the fault or attack (isolation) to enable an appropriate reconfiguration to take place. Corrective action will be made to eliminate the threat or minimize the effect on the overall performance of the system \cite{alwi2011fault,tan2002sliding}.
In flight control system, it is important to determine the best appropriate control action following a system failure in order to ensure safe operation and continuity of service.  Once a faulty, a compromised or a malicious UAV is detected, a fault handling system would either depend on a fault tolerance controller or remove the UAV from the formation and maintain the system running possibly with graceful degradation of its performance. Here, the later alternative will be considered. Removing the compromised UAV node needs deleting the node followed by updating the communication graph and the control law in the consensus algorithm.
An algorithm developed on top of what was presented in our previous work \cite{kim2016cubature} with the necessary conditions to automatically remove the faulty/malicious UAV node.

To make sure removing a node in the graph will not create two or more disconnected graphs, the following assumption put forward. The graph $\mathcal{G}$ is assumed a 2-vertex-connected, i.e., after losing any single vertex it remains connected.
\begin{definition}
	A graph  $\mathcal{G}$ is 2-connected if $ \mid \mathcal{V}  (\mathcal{G})\mid \textgreater 2 $  and for every $x \in \mathcal{V} (\mathcal{G})$ the graph $ \mathcal{G}-x$ 	is connected.
\end{definition}
 \begin{theorem}
 A connected graph $\mathcal{G}$ with at least
 three vertices is 2-connected iff for every two vertices $x, y \in  \mathcal{V} (\mathcal{G})$, there is a cycle containing both.
 \end{theorem}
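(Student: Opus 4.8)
The statement is the classical characterization of $2$-connectedness through common cycles; the plan is to prove the two implications separately, handling the ``if'' direction by a short direct argument and the ``only if'' direction by induction on the graph distance $d_{\mathcal{G}}(x,y)$.

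\textbf{The ``if'' direction.} Assume every pair of vertices of $\mathcal{G}$ lies on a common cycle. Since $\mathcal{G}$ has at least three vertices, it remains to show that $\mathcal{G}-v$ is connected for every $v\in\mathcal{V}(\mathcal{G})$. Fix such a $v$ and pick arbitrary $x,y\in\mathcal{V}(\mathcal{G})\setminus\{v\}$; let $C$ be a cycle through $x$ and $y$. The cycle $C$ is the union of two internally vertex-disjoint $x$--$y$ paths, and $v$ can lie on at most one of them, so the other is an $x$--$y$ path inside $\mathcal{G}-v$. Hence $\mathcal{G}-v$ is connected and $\mathcal{G}$ is $2$-connected.

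\textbf{The ``only if'' direction.} Assume $\mathcal{G}$ is $2$-connected. First I would record that every vertex has degree at least $2$: if $\deg(u)=1$ with neighbour $u'$, then $\mathcal{G}-u'$ leaves $u$ isolated while still having another vertex, contradicting connectedness of $\mathcal{G}-u'$. Now fix $x,y\in\mathcal{V}(\mathcal{G})$ and induct on $k=d_{\mathcal{G}}(x,y)$. For $k=1$, choose (using $\deg(x)\ge2$) a neighbour $x'$ of $x$ with $x'\ne y$; since $\mathcal{G}-x$ is connected it contains an $x'$--$y$ path $R$, and then the edge $xx'$, the path $R$, and the edge $xy$ form a cycle through $x$ and $y$ (this edge is not used by $R$ because $R$ avoids $x$). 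For $k\ge2$, let $P$ be a shortest $x$--$y$ path and $w$ the vertex of $P$ adjacent to $y$, so $d_{\mathcal{G}}(x,w)=k-1$; by the induction hypothesis there is a cycle $C$ through $x$ and $w$. If $y\in C$ we are done. Otherwise, since $\mathcal{G}-w$ is connected, pick an $x$--$y$ path $Q$ avoiding $w$, let $z$ be the last vertex of $Q$ lying on $C$ when $Q$ is traversed from $x$ toward $y$, and let $Q'$ be the $z$--$y$ subpath of $Q$, which meets $C$ only at $z$ and avoids $w$. Splicing the edge $yw$, the $w$--$z$ arc of $C$ that passes through $x$, and the subpath $Q'$ yields a cycle that contains both $x$ and $y$, closing the induction.

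\textbf{Main obstacle.} No deep idea is needed; the delicate part is verifying that the spliced closed walk in the inductive step is an actual cycle --- its three pieces are pairwise internally disjoint because $Q'$ touches $C$ only in $z$ and omits $w$, while the chosen arc lies on $C$ --- and that the degenerate cases ($y\in C$, or $x=z$, or $x\in\{w,z\}$) are correctly absorbed. I expect this bookkeeping, rather than any conceptual difficulty, to be the crux. As a remark, the ``only if'' direction also follows immediately from Menger's theorem, since $2$-connectedness is equivalent to having two internally vertex-disjoint paths between every pair of distinct vertices and any two such paths form a cycle; the distance-induction proof above is preferred here only because it is self-contained and uses just the graph-theoretic tools already present in the paper.
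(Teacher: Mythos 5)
Your proof is correct and complete: the ``if'' direction via the observation that a common cycle decomposes into two internally disjoint $x$--$y$ paths of which a deleted vertex can destroy at most one, and the ``only if'' direction by the classical induction on $d_{\mathcal{G}}(x,y)$, with the degenerate cases ($y$ already on $C$, $z=x$, $z\ne w$ because $Q$ avoids $w$) correctly flagged and absorbed. The comparison with the paper is lopsided, because the paper does not actually prove the statement: its ``proof'' gives one sentence for the sufficient direction (``if every two vertices belong to a cycle, no removal of one vertex can disconnect the graph,'' which is the skeleton of your first paragraph without the two-arc argument spelled out), and for the necessary direction it literally restates the claim with no argument at all, followed by remarks about $2$-edge-connectivity and the UAV communication topology that are irrelevant to the equivalence being asserted. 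In effect the paper is citing a standard textbook fact (the result appears in the graph-theory reference it cites), which is defensible in an applications paper, but there is no argument there to compare against for the hard direction; your distance induction --- or, as you note, the one-line appeal to Menger's theorem --- supplies the content that the paper omits.
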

\begin{proof}
 (sufficient condition): If every two vertices belong to a cycle, no removal of one vertex can disconnect the graph.
	 (necessary condition): If  $\mathcal{G}$ is 2-connected, every two  vertices belong to a cycle.
	
	 Since a $2$-connected graph is also $2$-edge connected, i.e., after losing any single edge, it remains connected \cite{diestel2005graph}. The  graph model in the formation setup is cyclic from the arrangement and wireless broadcasting communication nature of the UAVs.

\end{proof}

The algorithm proposed described in Figure \ref{figAlg} is with the assumption that at most there is one compromised UAV with either of attacks at the neighborhood $i$. The algorithm will remove the faulty UAV node-$k$ from the network and updates the communication graph and control law.
	\begin{figure}[h]
	\centering
	\includegraphics[width=0.8\textwidth]{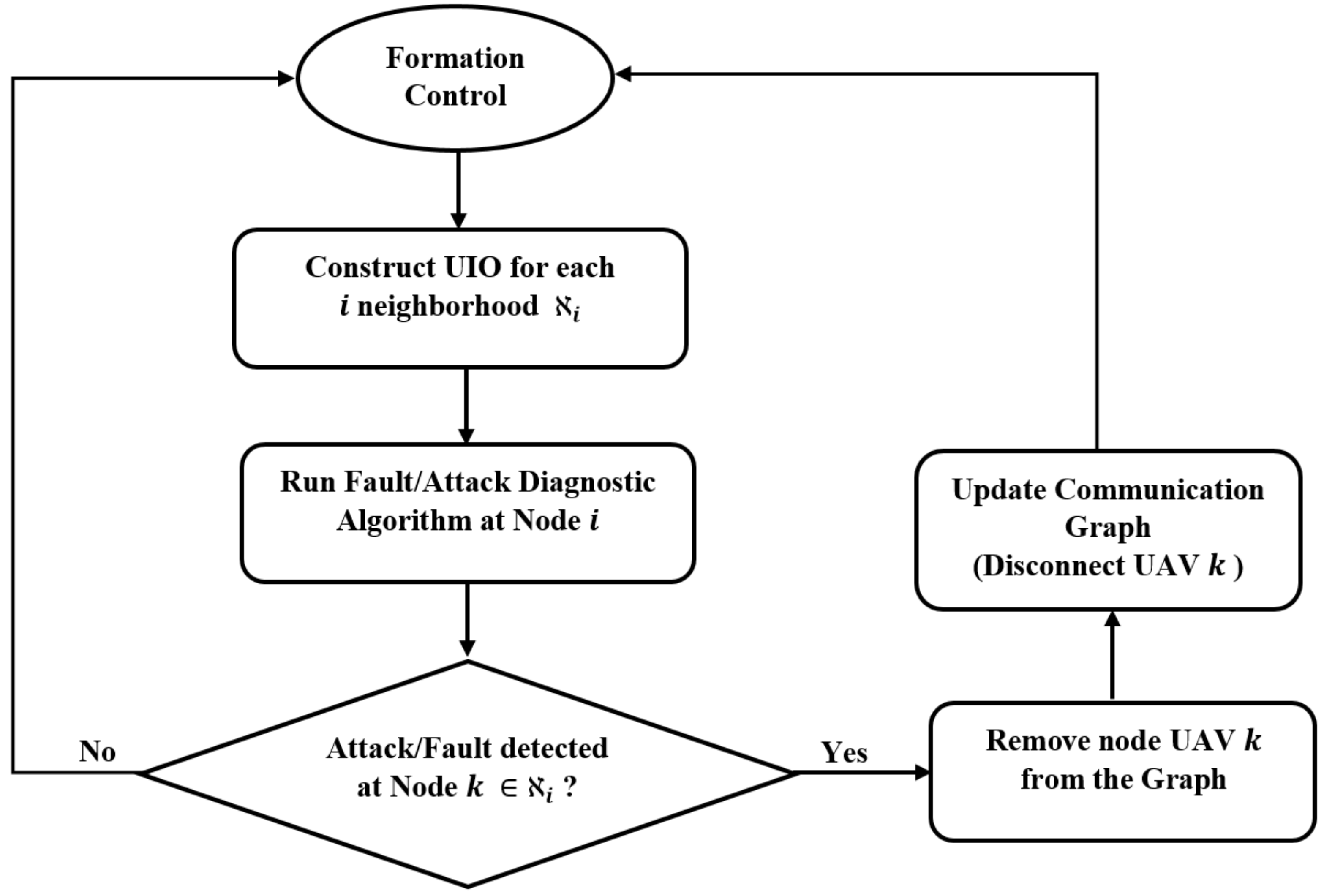}
	\caption{Compromised UAV Node removal and Formation control in the presence of an attack.}
	\label{figAlg}
\end{figure}

Each node identify the compromised UAV in the network using its bank of UIOs and update its communication graph accordingly. The graph theory property is exploited to remove the compromised UAV node. First, the adjacent matrix $\mathcal{A}$ is updated to $\mathcal{A} ^f$, where its element $a_{i,j}$ corresponding to faulty node $k$ is updated to $a^f_{i,j}$ as follows:
\begin{equation} \label{eq:adjUpdate}
a^f_{i,j} = 0, \quad \mathrm{if} \quad i=k \ \mathrm{or} \ j = k, \quad \mathrm{for \quad \forall}  i,j
\end{equation}

Second, the indegree matrix $\mathcal{D}$ is updated to  $\mathcal{D}^f$, where its diagonal element updated to $d^f_{i,i}$ as follows:
\begin{eqnarray}
d^f_{i,i} &=& d_{i,i} - 1 , \quad \mathrm{if} \quad i \neq k \quad \mathrm{for \quad  \forall}  i  \label{eq:eq_3_24} \\
d^f_{i,i} &=& 0, \quad \quad \quad \mathrm{if} \quad i=k \quad \mathrm{for \quad  \forall}  i \label{eq:eq_3_25}
\end{eqnarray}
Finally, the Laplacian matrix is updated and the control input gain is re-calculated using equation (\ref{eq:L_g}) and (\ref{OVR_allFormation}) respectively.
\section{Numerical Examples}
\subsection{Fault/Cyber Attack detection}
\par
To simulate  the two cyber attack types, a position offset and a communication deception are introduced in one of the UAVs in the formation setup. A bank of UIO based fault detection scheme is implemented on each of the UAVs, while each monitoring its neighbors to detect a cyber attack and identify the faulty or compromised UAV in the network which is connected according to the communication graph. For a numerical simulation purpose, an attack only on UAV-2, and the distributed attack/fault detection bank of UIOs on UAV-1 is considered. First, a fault/attack free formation flight of $6$ UAVs presented to see if the consensus based distributed controller managed to keep a specified hexagon formation flight of the UAVs.
\subsubsection{A Fault/Attack Free Formation Flight }

To illustrate the formation control, six UAVs at a hexagon corner considered with a formation vector $\tilde{h}=[(2~ , 0)^T ~ (1 ~,-1.73)^T~ (-1~ ,-1.73)^T~ (-2~, 0)^T~ (-1 ~,1.73)^T~ (1~ ,1.73)^T]$, hexagon centered at $(0,0)$ and radius of $2$ meters. The distributed control maintains a hexagon formation as illustrated in Figure \ref{fig3}. The UAVs considered to be in a level flight and started from arbitrary $x,y$ positions.
\begin{figure}[htbp]
	\centering
	\includegraphics[width=0.45\textwidth]{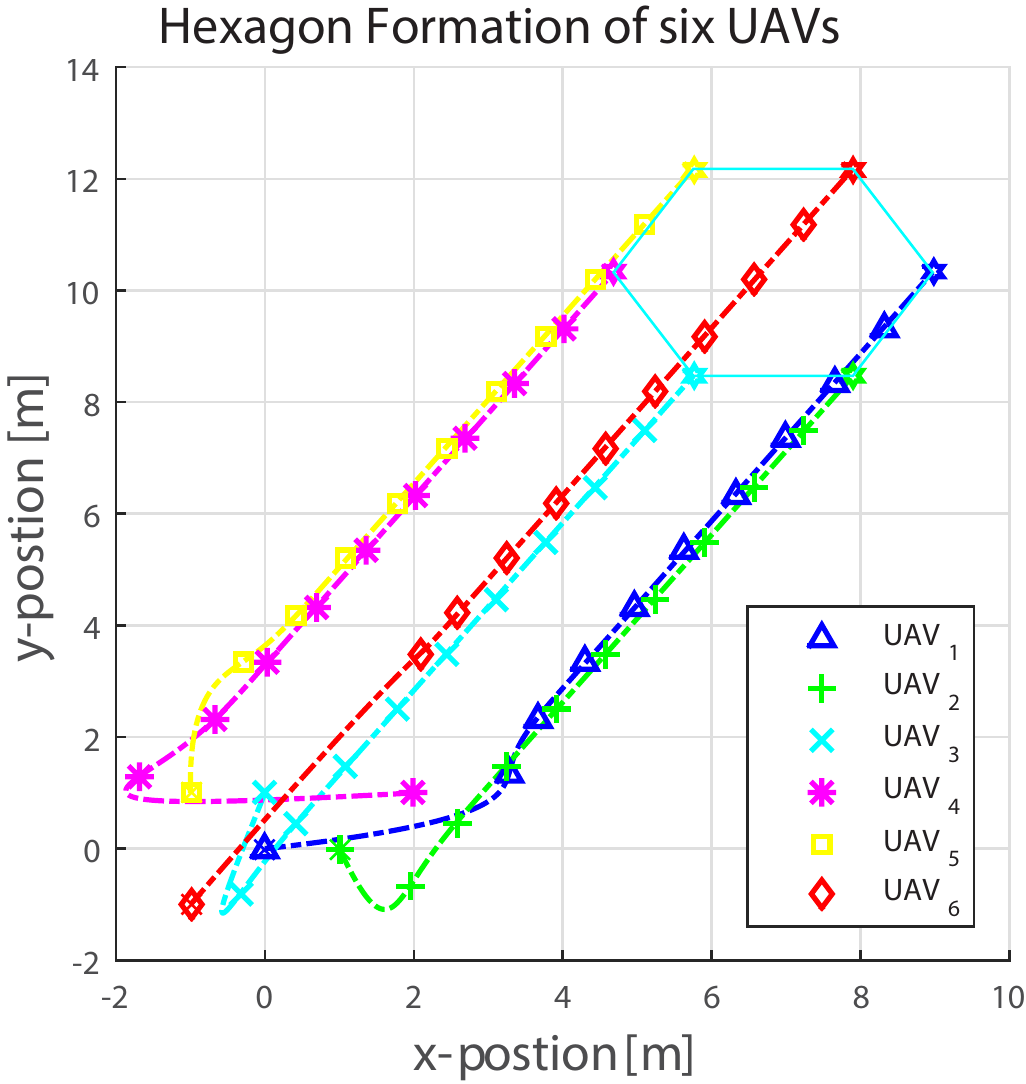}
	\caption{Six UAVs in Hexagon formation.}
	\label{fig3}
\end{figure}
\subsubsection{Node Attack}
UAV-2 is suffering from a node attack, modeled as an offset in its position in the formation setup (\ref{eq:k nodeAtt dym}). The node attack, offset in the UAV-2's x-position is introduced in the time interval between 0.5 and 4 seconds. It is evident from Figure \ref{fig4} that the hexagon formation is no more in place for the
\begin{figure}[htbp]
	\centering
	\includegraphics[width=0.5\textwidth]{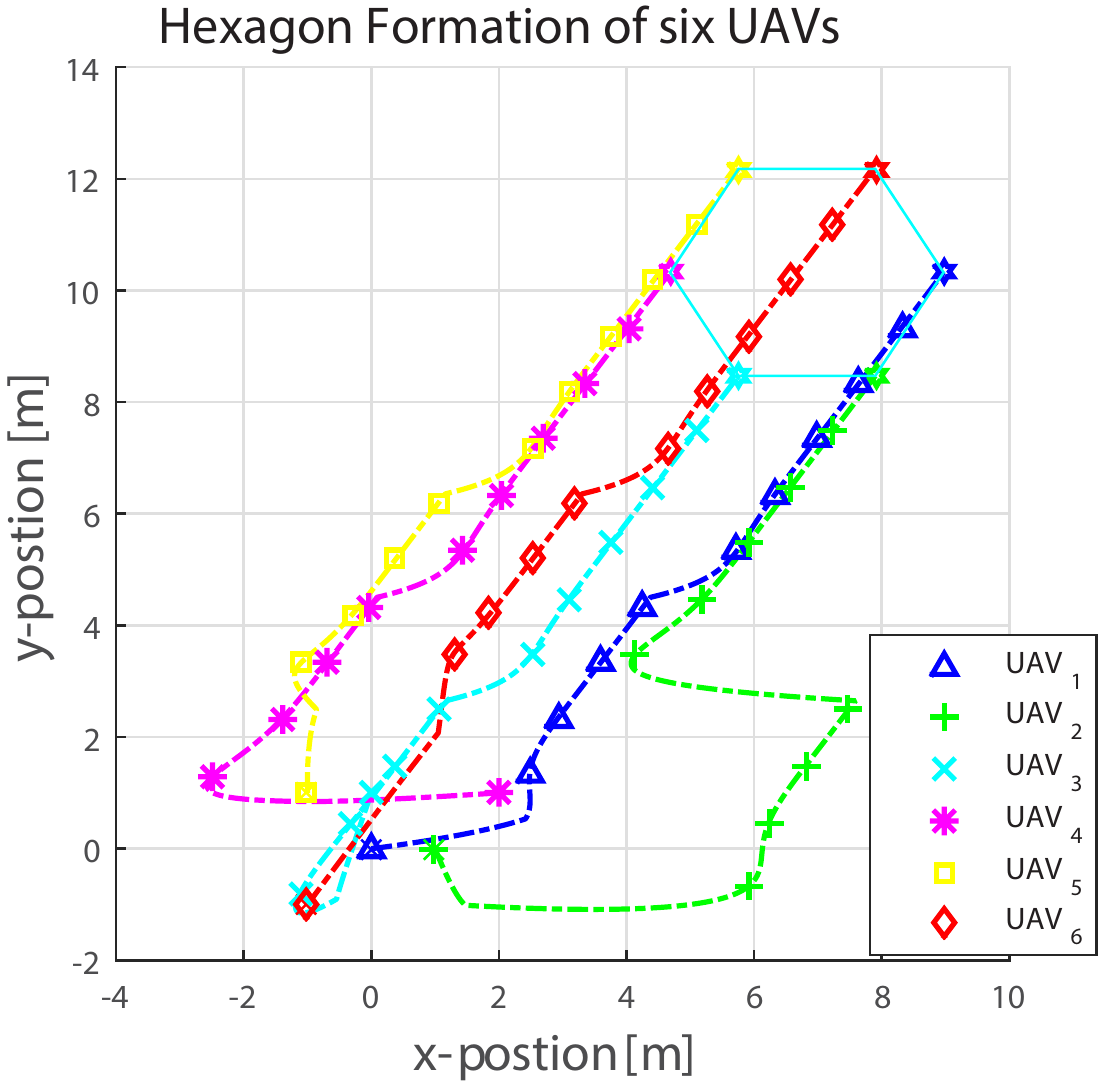}
	\caption{Formation UAV-2 under node attack in the time interval 0.5 and 4 seconds.}
	\label{fig4}
	\end{figure}
	\begin{figure}
	\begin{subfigmatrix}{2}
		\subfigure[x-positon error]{\includegraphics{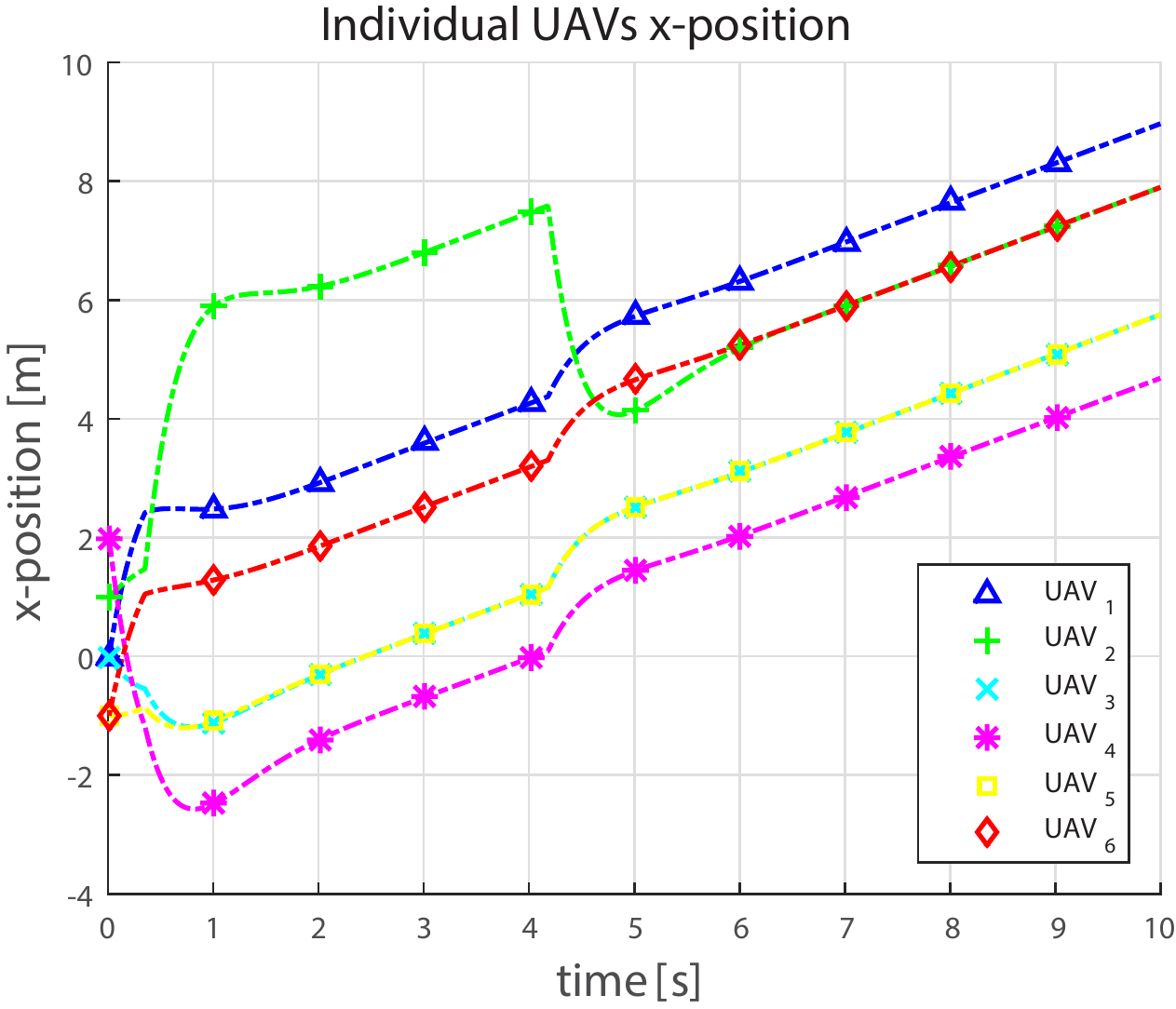}\label{fig5a}}
		\subfigure[Residual]{\includegraphics{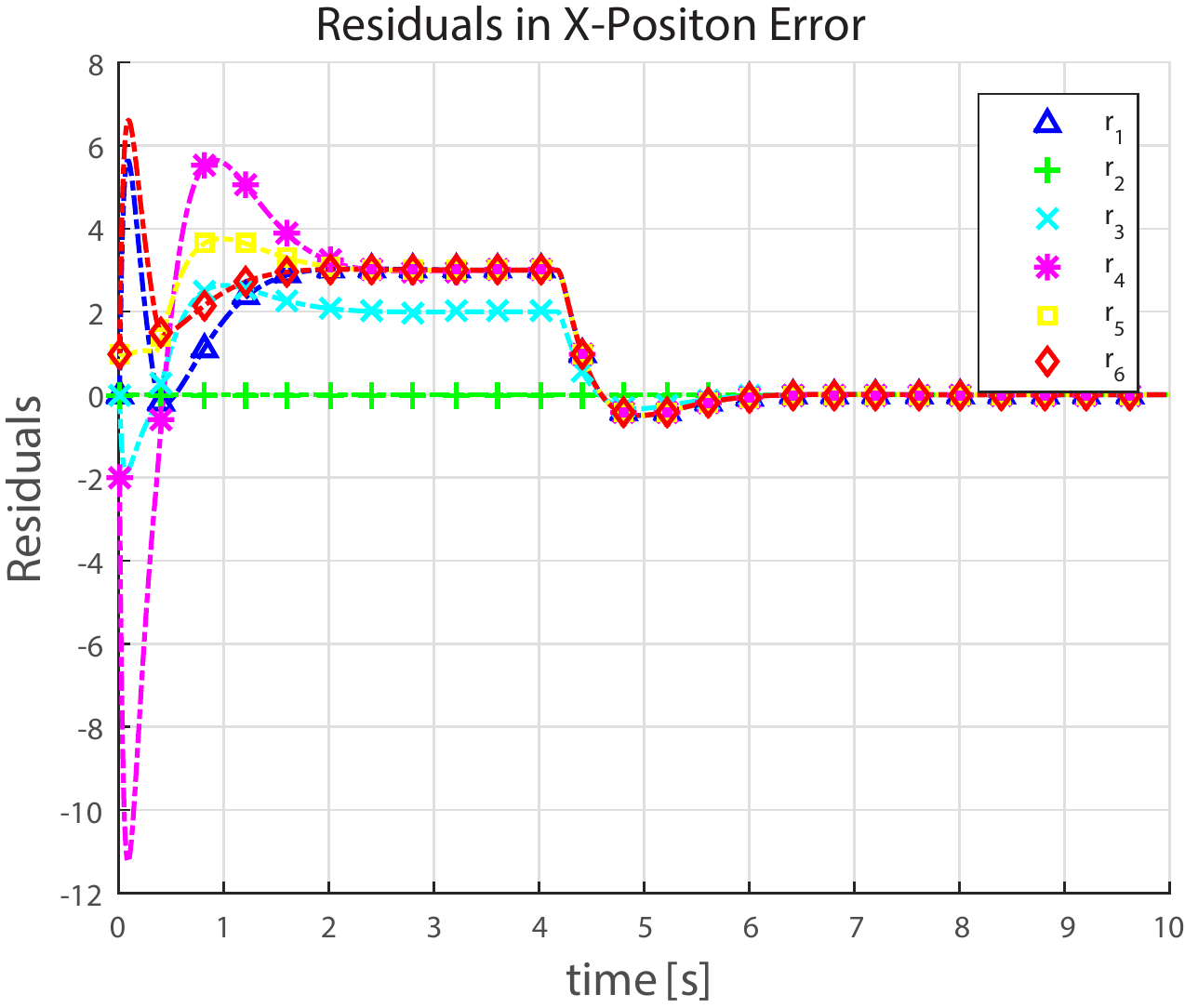}\label{fig5b}}
	\end{subfigmatrix}
	\caption{(a) x-position error introduced in the formation due to the node attack in the time mark between 0.5 and 4 seconds.
		(b) Residual generated at UAV-1.}
	\label{fig5}
\end{figure}
specified time period not only for the affected UAV but also the others.
This is because the consensus algorithm uses the relative position of the UAVs in the formation to calculate the feedback gain, hence the others will be affected too. The node attack on UAV-2 effect is seen in the Figure \ref{fig5a}, where the UAVs are no more able to maintain their x-position to complete the hexagon formation. The effect of the attack as an offset is more pronounced on a directly compromised UAV-2, while on its neighbors much subdued as it is affecting them through a feedback. UAV-2 also reacted differently since it had the disturbance, modeling the attack, in its own dynamics as it is  given on the fault scenario (\ref{eq:k nodeAtt dym}). A bank of UIOs' residuals at UAV-1 for each of its neighbor is plotted in Figure \ref{fig5b}. As it can be seen, from Figure \ref{fig5}(b), the residual corresponding to UAV-2 is zero while all other are larger, i.e. the UIOs at UAV-1 is made by design insensitive to a fault introduced by UAV-2 in the network of the UAVs. Using the threshold logic presented in section III.B.1, the agent UAV-1 not only detects the cyber attack in the network but also identifies UAV-2 as being a compromised
node in the formation.

\subsubsection{Out going Communication Attack}
\par
Much of the vulnerability of UAVs to a cyber attack is due to their communication to their surrounding environment. To illustrate a communication induced deception attack in the network of UAVs under formation flight, two form of attacks are presented  with different capability of an adversary in question: Offset introduced and Noise injection in the communication channel. A UAV under out going communication attack can distinguish between an attack on the node itself and the outgoing communication based on its internal measurement.
\paragraph{  Offset Introduced :}
Assume an adversary manages to get access to the broadcasting module of the UAV-2 in the network and introduced a bias in the outgoing broadcasted  signal of UAV-2 as explained in section III.B.2 (\ref{eq:k node_out}).
\begin{figure}[h]
	\centering
	\includegraphics[width=0.45\textwidth]{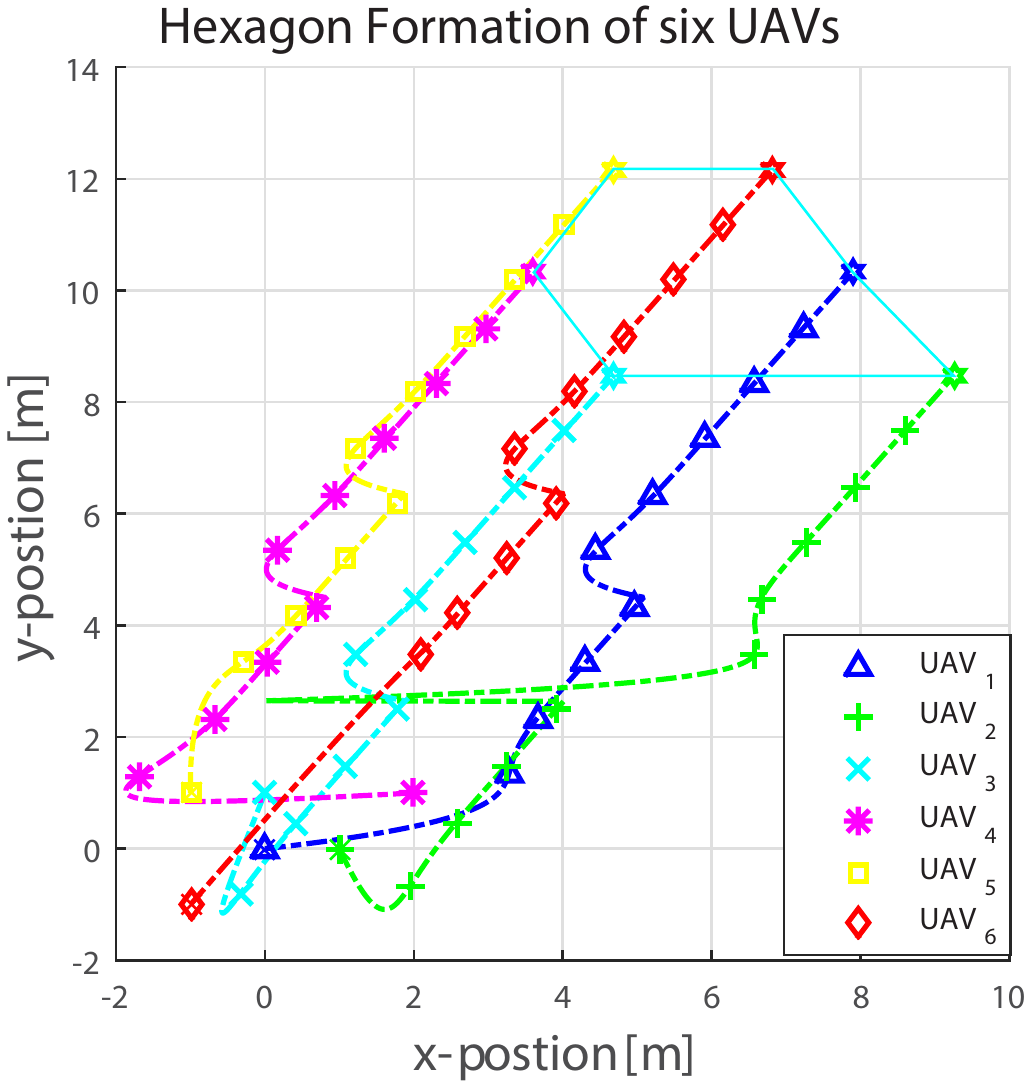}
	\caption{ Hexagonal formation UAV-2 under offset introduced communication induced deception attack in the time interval 0.5 and 4 seconds.}
	\label{fig6a}
\end{figure}
 The bias introduced  at a time mark of 4 seconds into the flight. UAV-2 is not aware of its outgoing signal is being corrupted while still computing its own control signals correctly.
   \begin{figure}[!htbp]
 	\begin{subfigmatrix}{2}
 		\subfigure[Hexagon formation $x-y$ position ]{\includegraphics{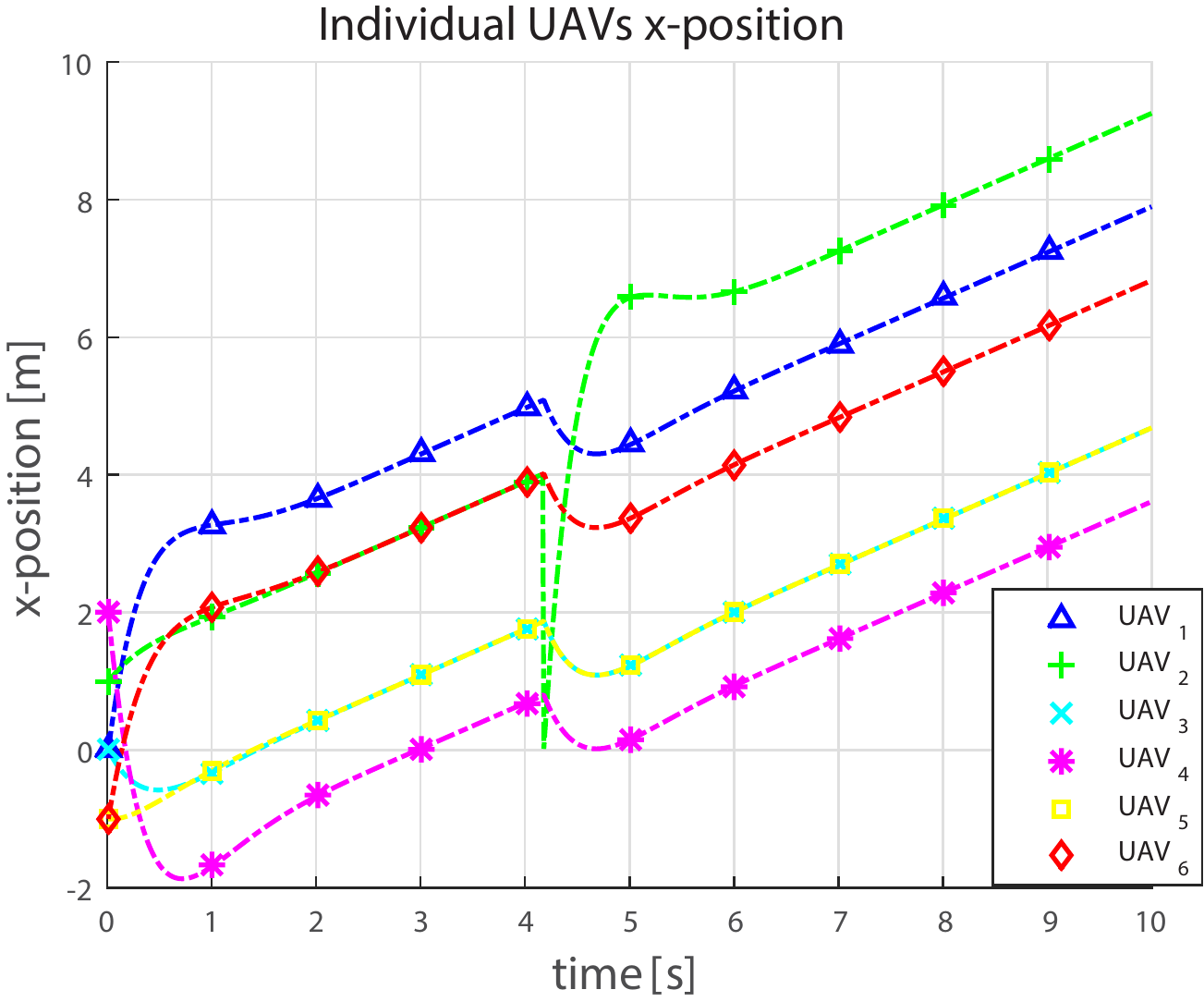}\label{fig6b}}
 		\subfigure[Residual]{\includegraphics{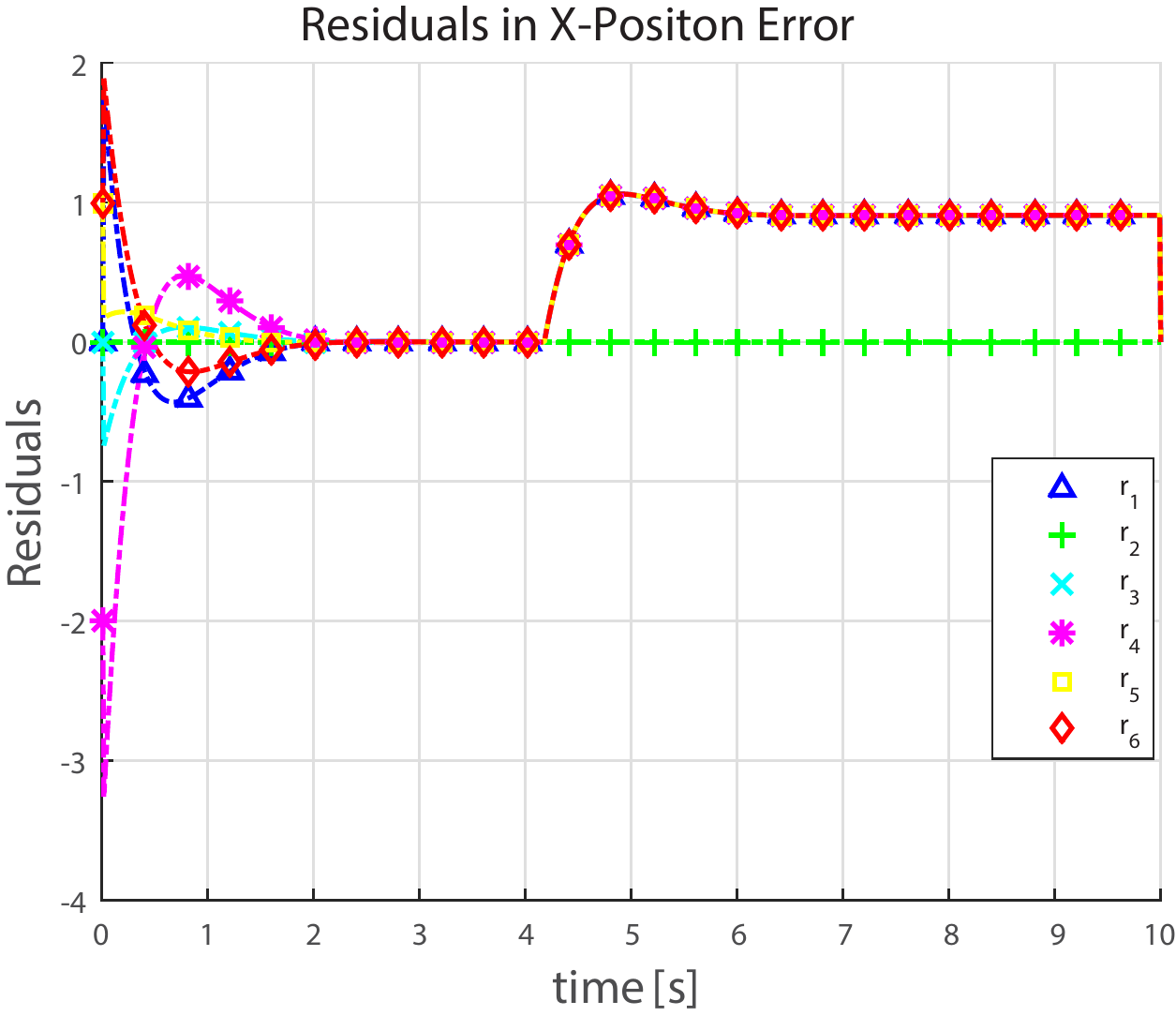}\label{fig6c}}
 	\end{subfigmatrix}
 	\caption{(a) Hexagonal formation, UAV-2 under offset introduced communication deception attack
 		(b) Residual generated at UAV-1.}
 \end{figure}

As it can be seen in Figure \ref{fig6a}, the other five UAVs except the one maliciously sending a corrupted signal to the rest, are in a hexagon formation. As they trust the signal coming from the UAV-2, they align themselves or reach consensus on the wrong measurement to the hexagon formation. UAV-2 is in offset, since it calculates its control signal from the uncorrupted measurement it has taken but still follows the rest of the UAVs as it is running the consensus algorithm. While building a bank of UIOs at each UAV, the concept behind is that each UAV should be able to check if it is behaving correctly using its internal measurement and communicated signals from its neighbors. As compromised UAV is not aware of its own transmitted data being corrupted and assumes all neighbor UAVs are misbehaving. As illustrated in Figure \ref{fig6c}, a bank UIOs based residuals generated at UAV-1 detects a cyber attack and also successfully identifies UAV-2 as the compromised node using the threshold logic. After the FDI system detects and isolates UAV-2 is compromised, at this point UAV-2 can  redefine its communication security key to avoid itself being disconnected from the rest of the formation flying UAVs.

\paragraph{Noise Injected in the Communication Path:}

Here, it is assumed that the adversary knows the communication channel model. This enables it to corrupt the message shared by UAV-2 with a random signal or inject a false data into the signals being communicated.
\begin{figure}[!htbp]
	\centering
	\includegraphics[width=0.45\textwidth]{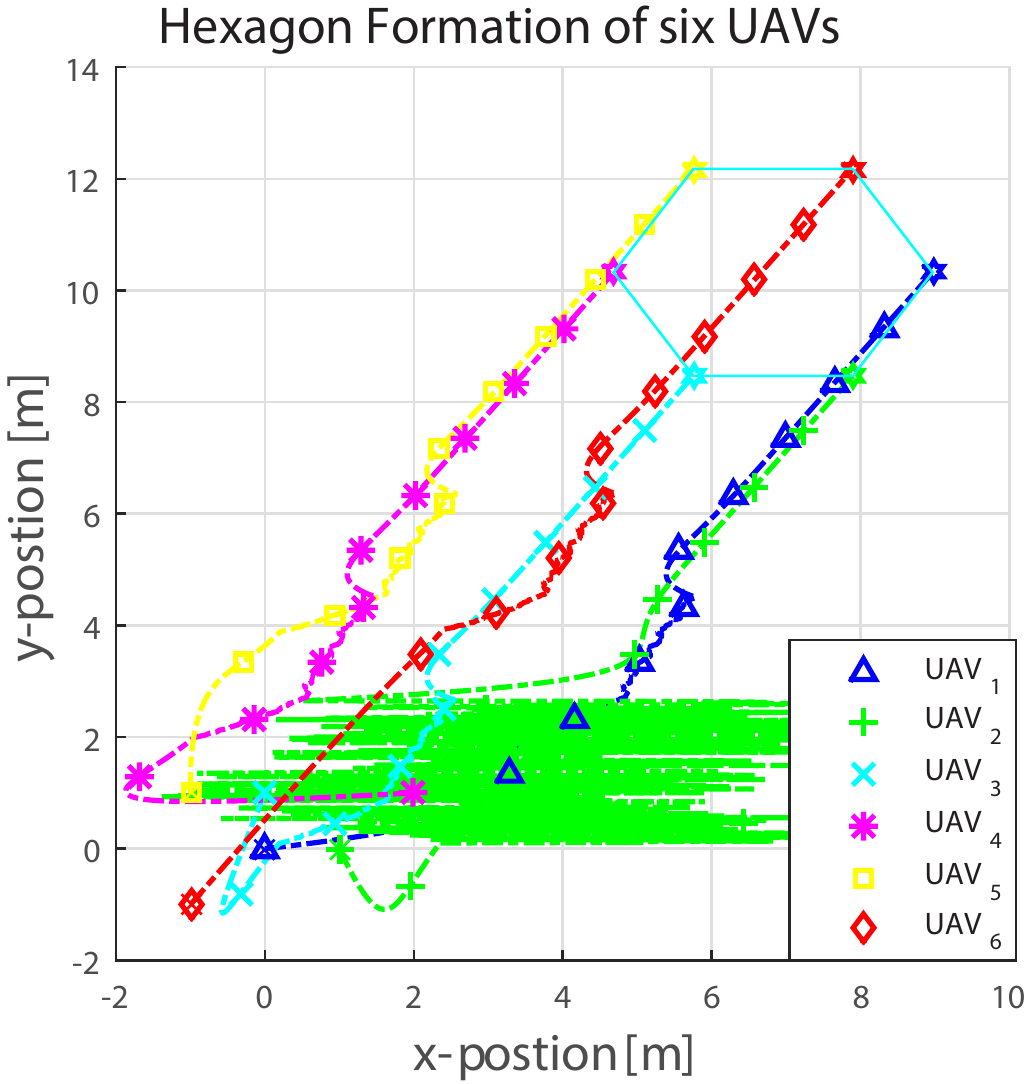}
	\caption{Formation under UAV-2 node random data injection attack at a time mark between 2 and 5 seconds.}
	\label{fig7Ja}
	\begin{subfigmatrix}{2}
		\subfigure[x-positon error]{\includegraphics{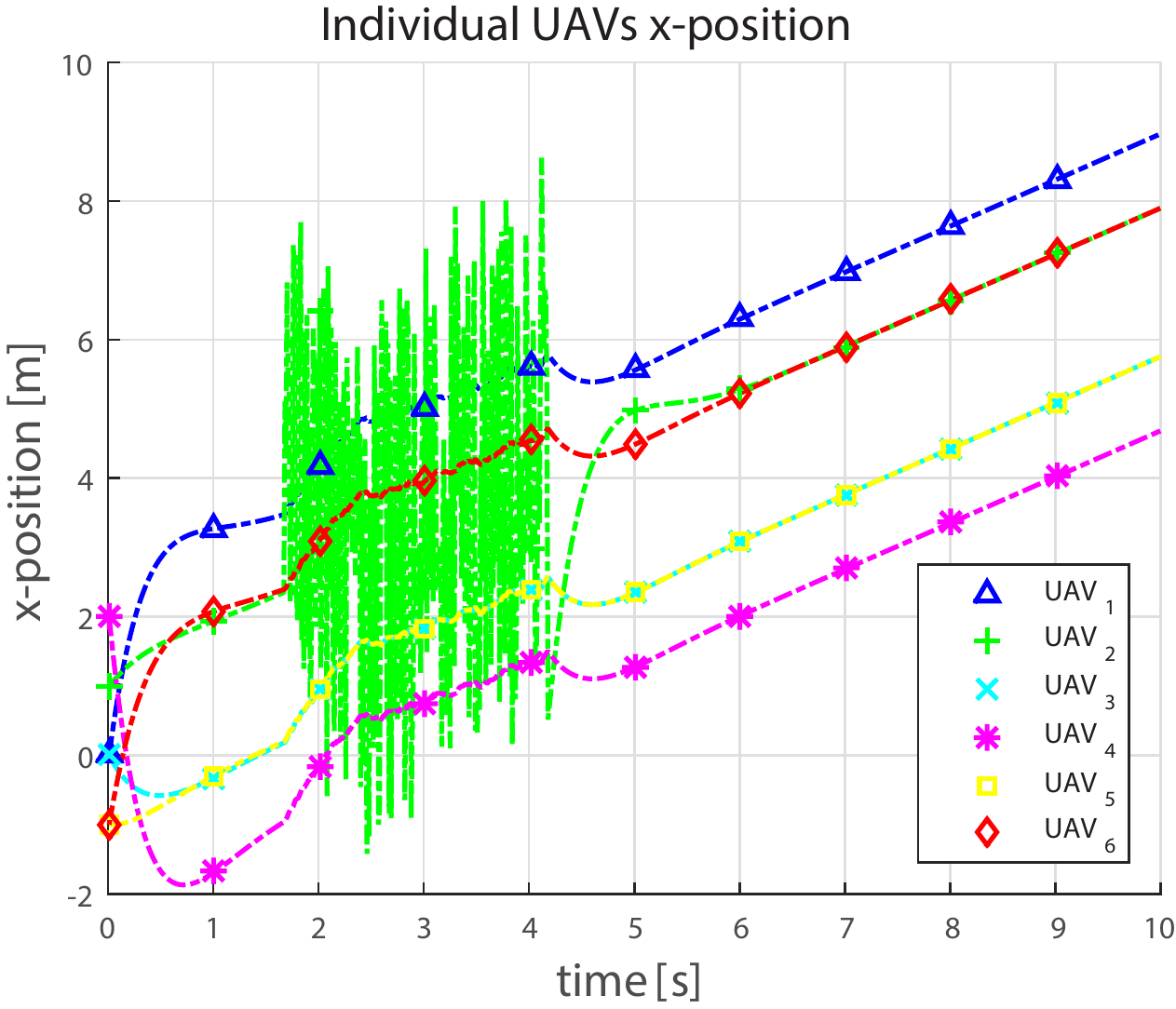}\label{fig8a}}
		\subfigure[Residual]{\includegraphics{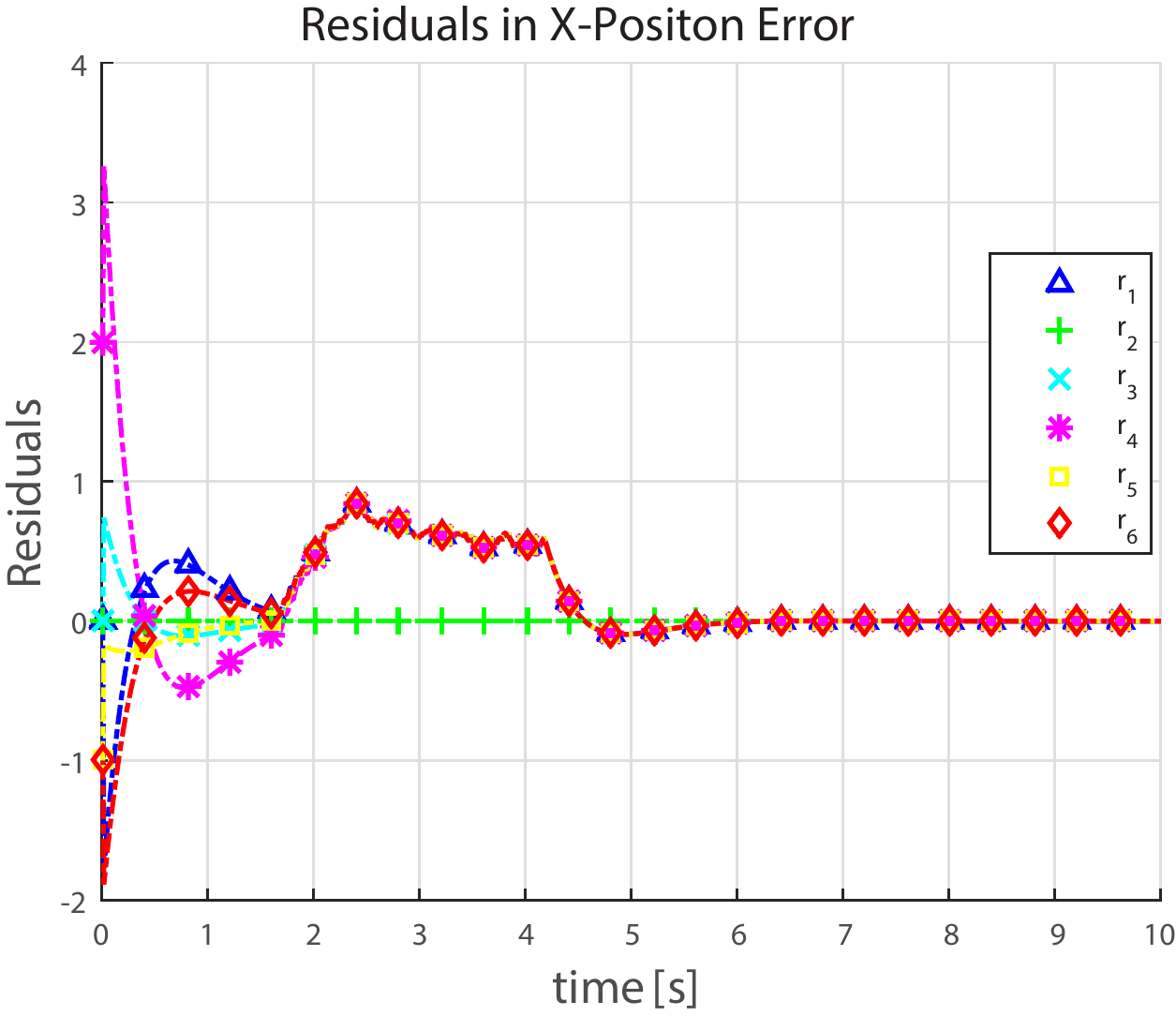}\label{fig8b}}
	\end{subfigmatrix}
	\caption{(a) x-position error introduced in the formation due to the UAV-2 node random data injection attack at a time mark between 2 and 5 seconds.
		(b) Residual generated at UAV-1.}
\end{figure}
With the same argument as in case of $'$Offset Introduced$'$ above, the signal received by the rest of the UAVs is being corrupted as it passed through  the vulnerable  communication channel in that attack time interval. As it can be seen in in Figure \ref{fig7Ja}, the consensus algorithm is unable to keep the formation in place, especially the one being affected by the noise, UAV-2. As illustrated in Figure \ref{fig8b}, a bank of UIOs based residual generated at UAV-1, all the residuals except residual from UAV-2 is non-zero, indicating the source of attack is UAV-$2$.

\subsection{UAV Under Attack Removal}
In a formation flight when one of the UAV misbehaves, either because of the above mentioned cyber attack or a fault, the formation flight will be no more in place. An attack introduced early, 4 seconds in the hexagon formation flight. The hexagon formation flight disrupted (Figure \ref{fig10}) as the compromised UAV-2 behaves differently due to the disturbance in its own dynamics representing the effect of the cyber attack.
\begin{figure}[!htbp]
	\begin{subfigmatrix}{2}
				\subfigure[Residual]{\includegraphics{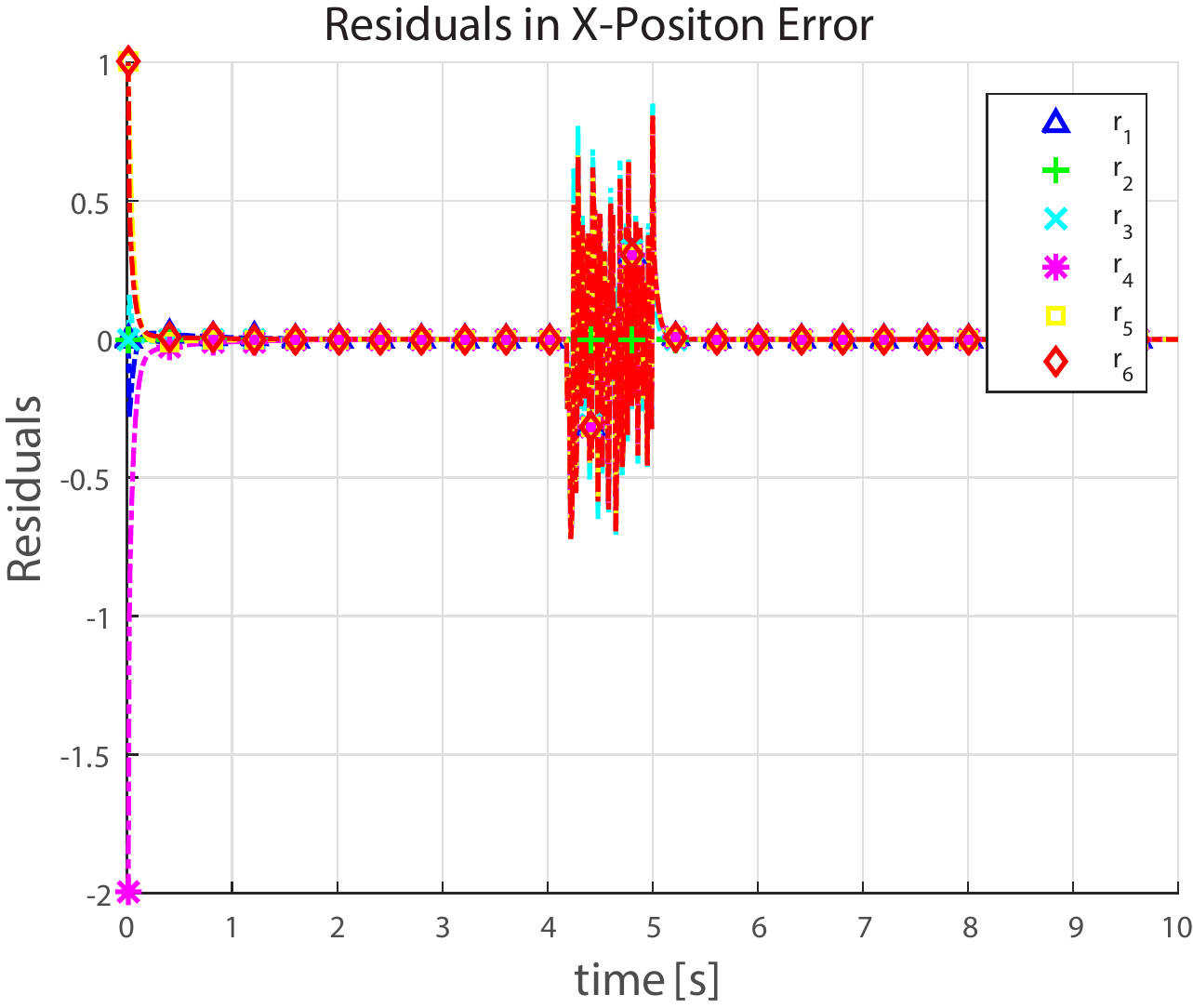}\label{fig9a}}
		\subfigure[x-positon error]{\includegraphics{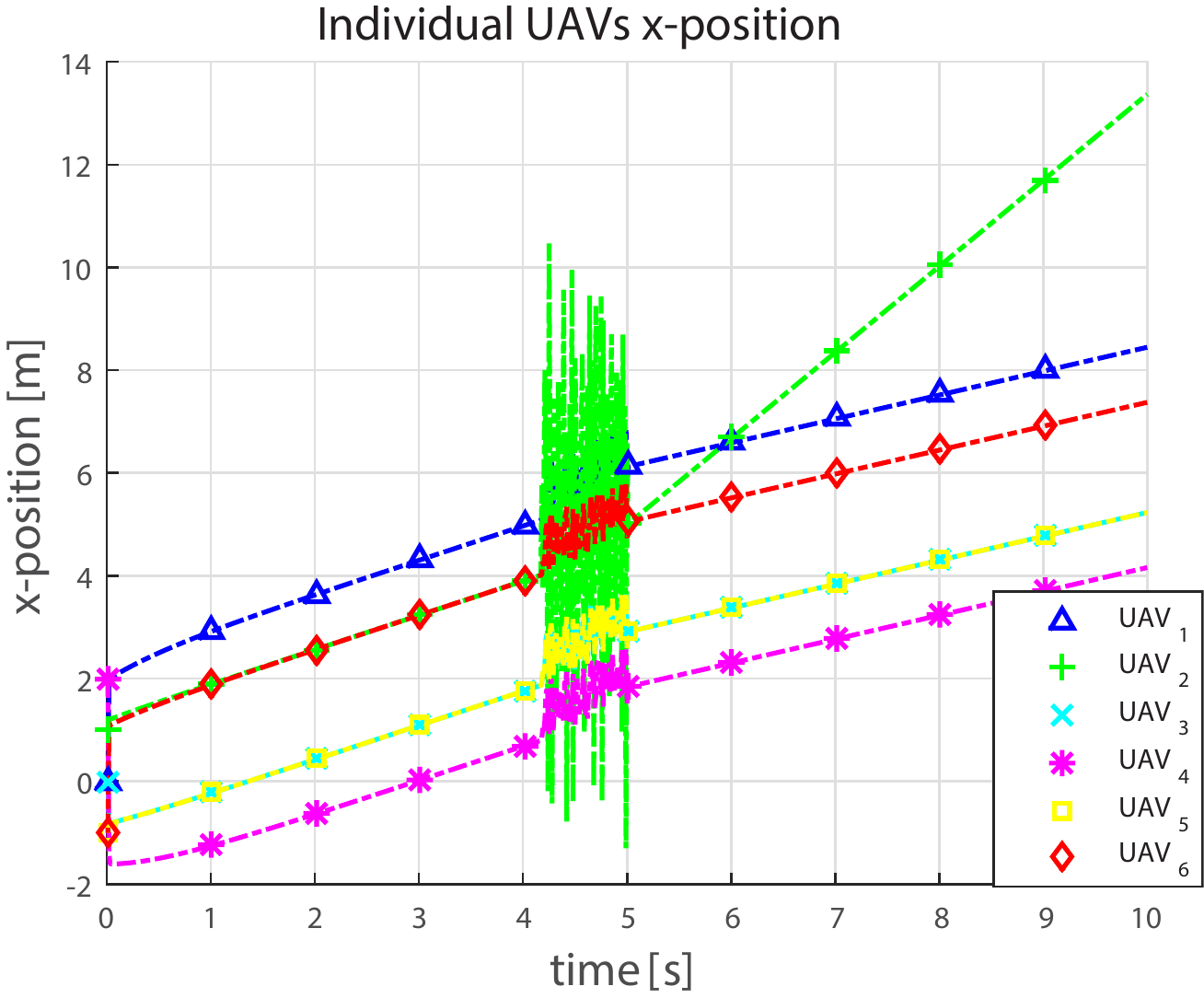}\label{fig9b}}
	\end{subfigmatrix}
	\caption{(a) Residual generated at UAV-1. (b) x-position random offset error introduced in the formation at node UAV-2 at a time mark  4 seconds into the flight.}
\end{figure}
\begin{figure}[!htbp]
	\centering
	\includegraphics[width=0.45\textwidth]{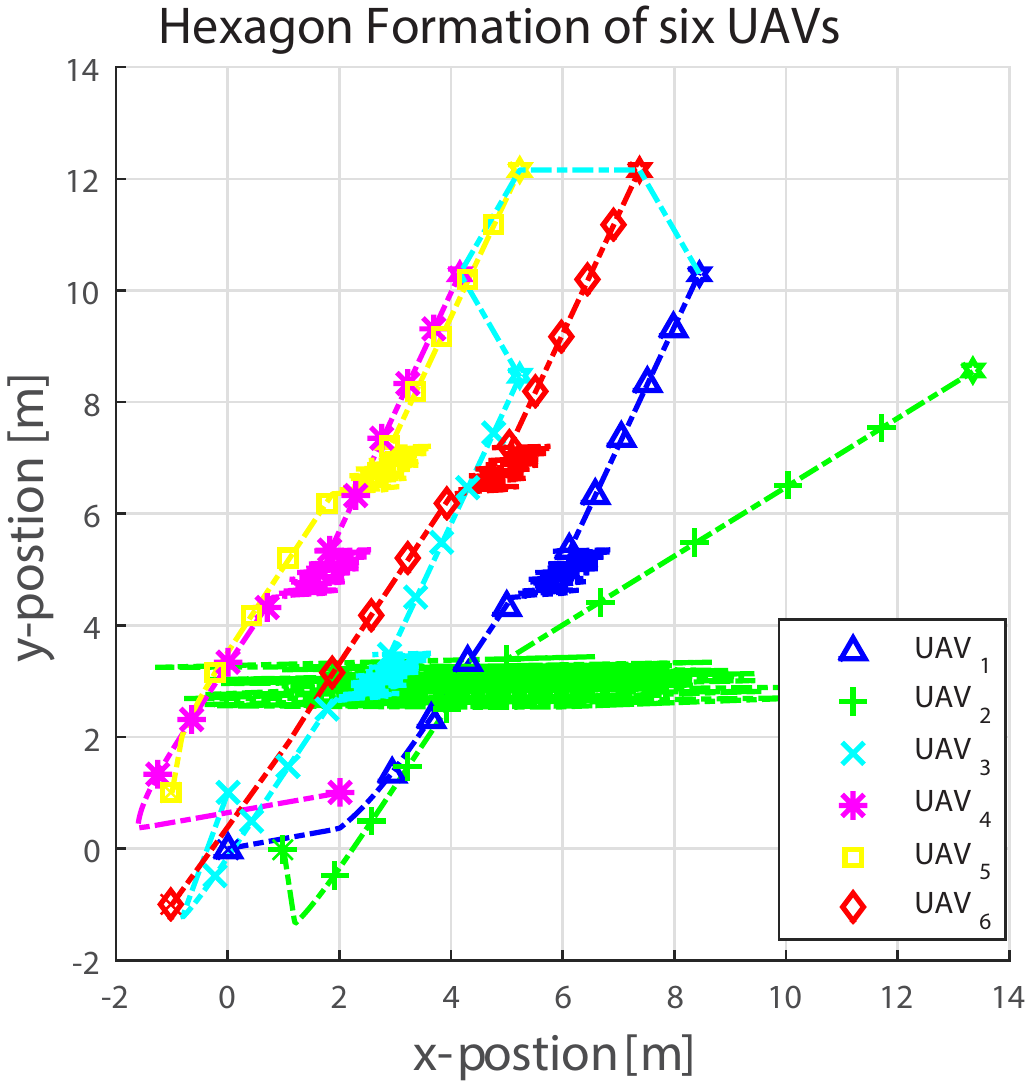}
	\caption{Faulty or Under attack UAV removal in Hexagon formation flight.}
	\label{fig10}
\end{figure}
As it could be seen for a short instant in Figure \ref{fig9a}, the FDI system detect and identified the malicious or compromised UAV as UAV-2. The compromised UAV made the over all formation flight short of any use. Once an attack detected in the formation flight, faulty node UAV removal algorithm  invoked. With the assumption forwarded, the connection graph is 2-connected, removing the compromised UAV will not create two separate network of UAVs which can't communicate to each other to reach consensus on their formation flight variables. The algorithm results in the  removal of the UAV-2 as it can be seen in Figure \ref{fig10}. The formation flight is kept in place with missing corner of the hexagon formation at node $2$. If the formation flight was meant for find and rescue mission or sensor networks, it would serve the purpose with degraded performance than losing the whole purpose of the flight.
	\section{Conclusions}
In this paper, detection of cyber attacks has been considered on a network of UAVs in formation flight. Because of the nature of formation flight and the control algorithm used, a distributed fault detection and isolation scheme proposed based on a bank of an unknown input observers which only requires local measurements. The proposed  fault detection scheme not only able to detect a cyber attack but also successfully identified the compromised UAV in the formation network. Furthermore, an algorithm  has been proposed to safely and automatically remove the faulty UAV or a UAV under attack while keeping the formation with degraded but functioning performance. Finally, a numerical case study have been given with a typical example of six UAVs in a hexagon formation with a possible node and communication deception attacks. Finally, a numerical case study has demonstrated that the residual generated at the monitoring node UAV able to successfully detect and isolate the cyber attack. Also, the faulty UAV removal algorithm has been shown effectively remove the  compromised UAV to maintain the formation accordingly. Future work includes extension of the proposed scheme to handle more complex attack patterns and applying the method for other types of multi-UAV coordination missions.

	\section{Acknowledgments}
	
	This work was supported in part by ICT R\&D program of MSIP/IITP [R-20150223-000167, Development of High Reliable Communications and Security SW for Various Unmanned Vehicles].


\bibliography{BibFile}
\bibliographystyle{aiaa}

\end{document}